\definecolor{light}{gray}{0.9}
\begin{document}
\title{Demystifying Advertising Campaign Bid Recommendation: A Constraint target CPA Goal Optimization} %: An Optimization Approach for Bid Recommendation}
%\titlenote{Produces the permission block, and copyright information}
%\subtitle{Extended Abstract}
%\subtitlenote{The full version of the author's guide is available as
 % \texttt{acmart.pdf} document}

%%\iffalse
\author{{Deguang Kong}{*}, Konstantin Shmakov and Jian Yang} 
\authornote{This work was done during the authors were working at Yahoo Research.  The
correspondence should be addressed to doogkong@gmail.com.  The views and conclusions contained in
this document are those of the author(s) and should not be
interpreted as representing the official policies, either
expressed or implied, of any companies. }
%%\author{Konstantin Shmakov and Jian Yang} % and Chiranjeevi Devi}
\affiliation{%
  \institution{Yahoo Research,   San Jose, California, U.S.A, 94089}
  }

  \email{ doogkong@gmail.com, kshmakov@yahooinc.com,  jianyang@yahooinc.com}

\begin{abstract}
In cost-per-click (CPC) or cost-per-impression (CPM) advertising campaigns, advertisers always run the risk of spending the budget without getting enough conversions. Moreover, the bidding on advertising inventory has few connections with propensity one can reach to target cost-per-acquisition (tCPA) goals. To address this problem, this paper 
presents a bid optimization scenario to achieve the desired CPA goals for advertisers. In particular, we build the optimization engine to make a decision by solving the rigorously formalized constrained optimization problem, which leverages the bid landscape model learned from rich historical auction data using non-parametric learning.  The proposed model can naturally recommend the bid that meets the advertisers' expectations by making inference over advertisers' history auction behaviors, which essentially deals with the data challenges commonly faced by bid landscape modeling: incomplete logs in auctions, and uncertainty due to the variation and fluctuations in advertising bidding behaviors.  The bid optimization model outperforms the baseline methods on real-world campaigns, and has been applied into a wide range of scenarios for performance improvement and revenue liftup. 
%both native and search ads campaigns 
%with marketplaces interactions. 
\end{abstract}

%
% The code below should be generated by the tool at
% http://dl.acm.org/ccs.cfm
% Please copy and paste the code instead of the example below. 
%
\iffalse
\begin{CCSXML}
<ccs2012>
 <concept>
  <concept_id>10010520.10010553.10010562</concept_id>
  <concept_desc>Computer systems organization~Embedded systems</concept_desc>
  <concept_significance>500</concept_significance>
 </concept>
 <concept>
  <concept_id>10010520.10010575.10010755</concept_id>
  <concept_desc>Computer systems organization~Redundancy</concept_desc>
  <concept_significance>300</concept_significance>
 </concept>
 <concept>
  <concept_id>10010520.10010553.10010554</concept_id>
  <concept_desc>Computer systems organization~Robotics</concept_desc>
  <concept_significance>100</concept_significance>
 </concept>
 <concept>
  <concept_id>10003033.10003083.10003095</concept_id>
  <concept_desc>Networks~Network reliability</concept_desc>
  <concept_significance>100</concept_significance>
 </concept>
</ccs2012>  
\end{CCSXML}
\fi
\iffalse
\ccsdesc[500]{Computer systems organization~Embedded systems}
\ccsdesc[300]{Computer systems organization~Redundancy}
\ccsdesc{Computer systems organization~Robotics}
\ccsdesc[100]{Networks~Network reliability}
\fi

%\keywords{Ads, Consensus, Clustering, Kolmogorov-Smirnov, KS, Optimization, Spectral}

\maketitle

\section{introduction}

Online advertising\footnote{\url{https://www.statista.com/statistics/183816/us-online-advertising-revenue-since-2000/}} uses the internet to deliver the promotional marketing messages provided by advertisers to consumers on the publisher platform by integrating the advertisements\footnote{ {``Ad'' is a shortened colloquial form of  ``advertisement''.}} into its online content.  %Internet advertising is a big market, and its revenue has increased  to a record of \$72.5 billion\footnote{\url{https://www.statista.com/statistics/183816/us-online-advertising-revenue-since-2000/}} in 2016.  
Bid optimization is a technique to optimize the bidding process, which allows the advertiser to maximize the revenue, return on investment (ROI) and other performance metrics. Native advertisement appeared in publisher side where the ad experience looks like the natural without affecting the function of user experiences.  An example  of native ad is shown in Fig.~\ref{fig:example}.

\iffalse
As the most advanced online advertising system (such as Google, Facebook) in the world, the ads are presented to customers after the advertiser wins the auctions in desktop or mobile platforms. Two ad formats involved are demonstrated as follows:
%\begin{itemize}
%\item 
$\bullet$  Search ads:  The ads appear in web pages which rank the results given the queries using search engine~\cite{Colini-Baldeschi}.  An example of ad given query "leaf bridge" is shown in Fig.\ref{fig:example}.
 %\item 
$\bullet$  Native ads: The ads appear in publisher where the ad experience looks like the natural without affecting the function of user experiences.  An example  of native ad is shown in Fig.~\ref{fig:example}. 
%\end{itemize}
\fi

The advertising platform we built for a major web portal has its unique ecosystem, which owns both the demand side and supply side platforms that allow us to collect
integrated supply side impressions, and ad campaign information on demand side.  
Given the rich campaign history data, learning the advertisers' bidding and targeting behaviors %(and even attribution of conversions) 
is very prominent to forecast the advertisers' future behaviors so as to make informed decisions for bidding and budget planning.    The traditional way of manual bidding or rule-based bidding (i.e., generation of bidding rules) 
is not trustful because the problem complexity is steadily increasing. For example, for the same key word ``buy jeans", there is 20 conversions with
\$2 CPC bid for advertiser 1 but only 5 conversions with the same price for advertiser 2. To effectively learn the useful 
"signals" available for bid adjustment to advertisers, we build effective landscape models to capture what the cost and win rate will look like if the advertisers change the bidding price.   For performance campaigns (those that link advertising to a user's purchase, a sign-up, or some other desired action), we require the learning algorithms to leverage history data into the
buying process through a learning and optimization process which enhances an advertiser's natural instincts by automatically optimizing bids based on historical observations.

\begin{figure}[t]
	\centering
	\includegraphics[height=1.7in,width=0.45\textwidth]{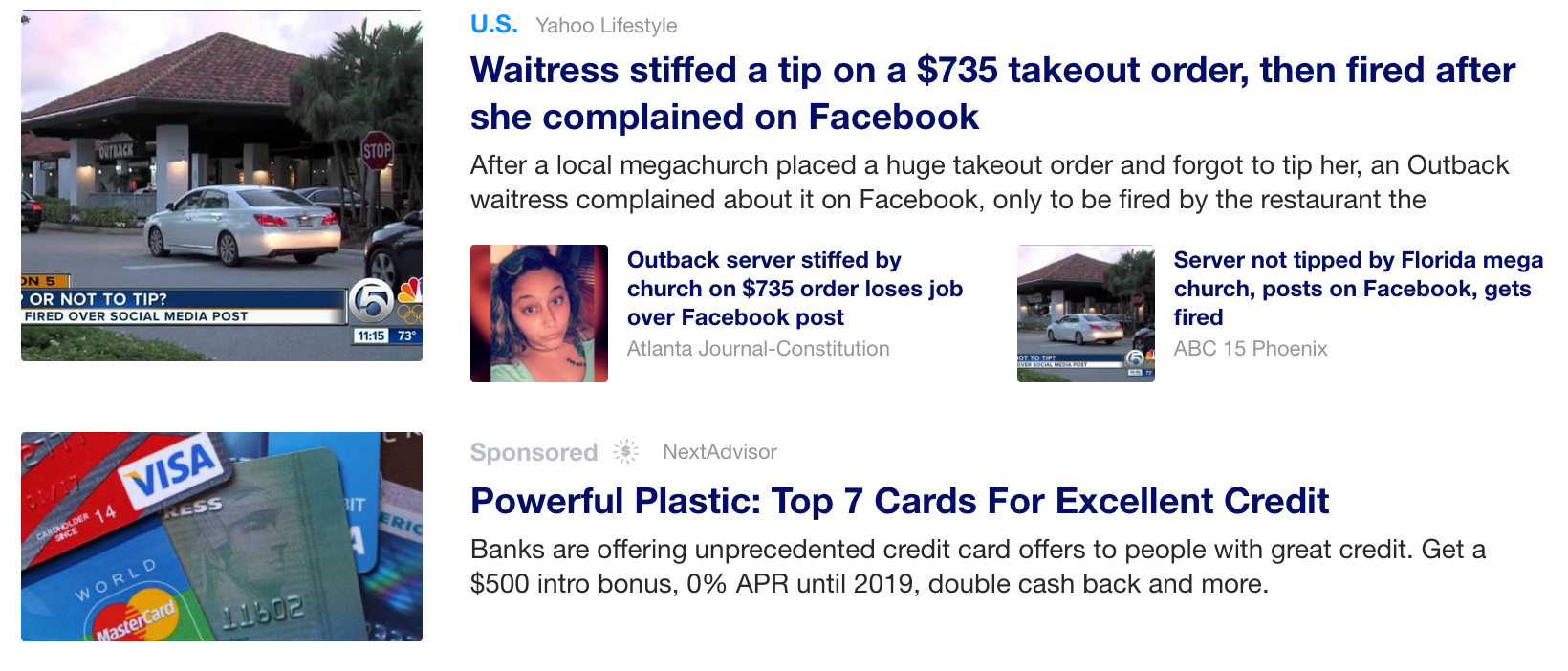}
		\caption{\small{The native ad appears in web display pages. The sponsored ad of  ``visa credit'' card looks like the news.  
	}
	}
	\label{fig:example}
\end{figure}

While advertisers may require to achieve desired goals (such as  clicks, conversions or ROI),  we believe cost per acquisition (CPA) goal is always important for consideration of the transformation of clicks into the true conversions, especially given the fact that many advertisers are more concerned with the increase in revenue.  In this paper, we propose effective ways for CPA goal optimization\footnote{
%Compared to CPA,  cost per mille impressions (CPM)\f
As the social media ads, Facebook actually used optimized cost per mille (oCPM) criterion to allow advertisers to bid for click and actually pay per impression.  Cost per mille impressions (CPM) poses high risk on the conversions, %cost per sale (CPS) ignores the value of click which potentially provides {\color{red} worse traffic liquidating}, 
and cost per click (CPC) controls the cost of clicks after running page views without giving enough attention to the true conversions that may actually bring in revenue increases. 
},  including achieving the corresponding conversion and sales (using target CPA goals). %or revenue (via target ROI goals).  
Our approach will help the advertisers programmatically choose the best ``smart'' bid, and then derive the most possible scale for the targets, which in turn, acquires more traffic for e-commerce business. 

In the typical advertising system,  advertising platform logs all campaign history auction events (BH module in Fig.\ref{fig:flow}), and then bid landscape model and key performance indicator (KPI) modules (BLP  in Fig.\ref{fig:flow}) such as click-through-rate (CTR)~\cite{kongctr}, conversion rate (CVR) prediction models are built using history observations.  After that the bid optimization strategy module (BOPT in Fig.\ref{fig:flow}) is responsible for providing the optimized bidding strategy to advertisers. %bbuilt upon the bid landscape model and click-through-rate (CTR), conversion rate (CVR) prediction systems (shown in Fig.\ref{fig:flow}).
This paper presents a strategy for bid optimization to perform bid adjustment based on how likely a click is to lead to a conversion. It therefore can help advertisers to plan CPA goals and revenue targeting. The proposed bid optimization strategy plays the important role to the following use cases (shown in Fig.\ref{fig:flow}): (i)An advertiser wants to create a new advertising campaign and aims to find the answers to the questions such as ``\emph{what price should I bid  to achieve the best performance?}" After running bid optimization module, we can provide bid suggestions to the advertiser. 
(ii) For real-time ad serving events on Demand side,  the advertiser wants to know "\emph{how much should I bid to win ad inventory on publishers?}" After running the bid optimization module, we provide the bid recommendations for advertisers to win impressions. % and clicks. 
%---------------------------------------
\begin{figure*}[t]
	\centering
	\includegraphics[height=1.8in,width=0.8\linewidth]{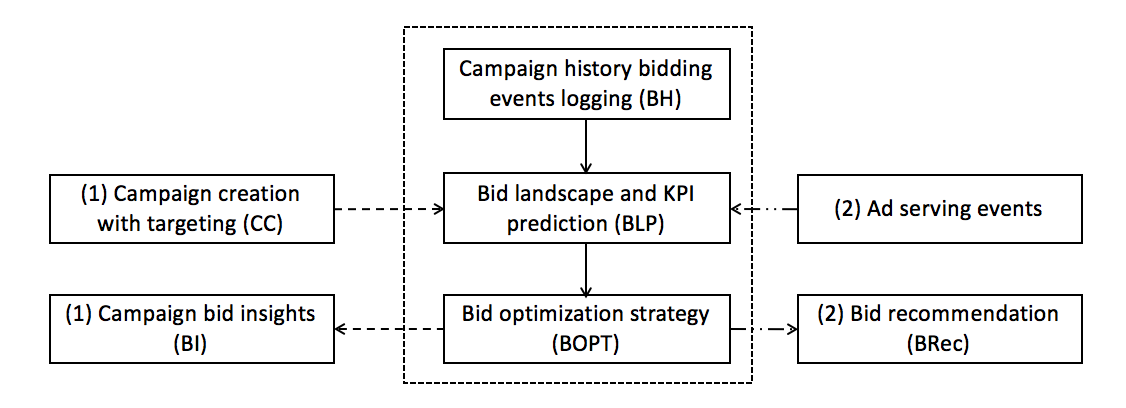}
	\caption{\small Workflow of bid optimization in advertising system.}
	\label{fig:flow}
\end{figure*}
%In this paper, we present an effective way to automatically set bids for auctions to get as many conversions as possible within the target CPA goals.  Then we can recommend bid to advertisers for adjustment based on how likely a click is to lead to a conversion. It therefore can help advertisers to plan CPA goals and revenue targeting. We believe all the algorithm advancement can bring in more predictive signals and better target accuracy.  Compared to rule-based model and manually labelling, we can get better conversion targeting with the novel CPA goal optimization. 
To summarize, this work makes the following contributions. 
%\begin{itemize}
%\item 

$\bullet$ To help advertisers achieve the desired conversion goals, we propose a \emph{constraint optimization} scheme that allows the advertisers to smartly bid by naturally transforming the CPA targeting to bid. 
%\item 

$\bullet$ To overcome the limitation of data sparsity and uncertain auction behaviors in ad marketplace, we propose an effective \emph{bid landscape model} by learning the win-rate and eCPM cost distributions from history data that can well capture the advertiser auction behavior variations and fluctuations. 
%\item

$\bullet$ Extensive experiment results demonstrate the \emph{performance improvement} over the baseline methods. A/B test shows our algorithm yields better results than current bid without optimization. We deliver and validate our approach in online advertising system in production. 
%\end{itemize}

%sudo find / -name *.awk

\section{Constraint CPA Goal Optimization}

\subsection{Preliminary and Related Works}

In the ad ecosystem,  %~\cite{Noti:2014:EEB}, 
publishers would like to keep site profitable by selling ads to advertisers and provide display opportunity (on web pages, social media, etc). The advertisers would like to win these opportunities through participation of auctions.  Ad exchange is responsible for making communications between supply side (publisher) and demand side (advertisers) by allowing auctions for display opportunities.  Most popular auction mechanism for multiple items is generalized second-price auction (GSP)~\cite{10.1257/aer.97.1.242}~\cite{DBLP:conf/sigecom/AggarwalGM06}, where after each bidder places a bid, the highest bidder gets the first slot and the second-highest wins the second slot and so on, but the highest bidder only needs to pay the price bid by the second-highest bidder.  In practice, Cavallo et al. \cite{Cavallo:2017:SSA:3038912.3052703} re-designed the search ads selling system that achieves the properties analogous to those held by GSP for supporting a variety of ads sizes, decorations, and other distinguishing feature. 

In online ads system, one common objective is to maximize the advertisers' value~\cite{DBLP:conf/kdd/ZhuJTPZLG17}. Predicting click-through-rate (CTR) and conversion rate (CVR) are critical tasks for online advertising. For example, Edizel et al. \cite{Edizel2017} proposed to use convolutional neural networks to predict ad-level CTR for sponsored search\footnote{\small{Please refer to \S 2.4 for more discussions.}},  Lee et al. \cite{Lee:2012:ECR:2339530.2339651} proposed a logistic regression model for conversion modeling,  and pre-click and post-click user engagement are improved in ~\cite{Zhou:2016:PPQ:2872427.2883053}, ~\cite{Barbieri:2016:IPU:2872427.2883092}, respectively.

Bid optimization~\cite{Perlich:2012:BOI:2339530.2339655} ~\cite{Zhang:2014:ORB:2623330.2623633} is a process to recommend the best bid for the advertisers. On one hand, if the advertisers bid too low, it may lose the ads display opportunity due to GSP mechanism. On the other hand, if one bids too high, it will waste money but get less leverage on clicks or other KPIs.  Bid optimization aims to provide the advertiser the best bid price to achieve the desired goal from learning the history campaign performances.  
Cost per acquisition (CPA) measures how much the advertiser needs to pay for per acquisition (e.g., conversion), and allows one to get best performance out of the campaigns with target CPA. This is useful because the advertiser can start with targets that align with historical CPA.   Further, CPA goal optimization predicts future conversions associated with the reported conversion values (through conversion tracking).  %To our best knowledge few (if not none) works have discussed the bid optimization mechanism in consideration of setting CPA goals. 
%In Section 2.1, I do not buy the claim of authors: few (if not none) works have discussed the bid optimization mechanism in consideration of setting CPA goals. 
In Google Adwords, one of the bidding strategies is target CPA \footnote{\url{ https://support.google.com/adwords/answer/6268632?hl=en}}, where the conversion can be tracked using Google Analytics for monetization in online advertising. 
%, the target CPA is a common choice for online advertising. This product is already well monetized in
for Google Display Network, Search and Youtube. 
%I do not think this claim is true in the real world.

Recently, a horizon-independent pricing approach~\cite{DBLP:conf/www/Drutsa17} is proposed for the ad auctions in which the seller does not know in advance the number of played rounds, and Mirrokni et al. \cite{DBLP:conf/www/MirrokniN17} designed the new contract for advertisers with advanced targeting capabilities in order to maximize the revenue.  Budget management is considered in ad auctions by combining the seller's profit and buyers' utility \cite{DBLP:conf/www/BalseiroKMM17}. Many works discuss price design, including reserve prices design in single item auctions~\cite{DBLP:conf/www/LemePV16},  mixture bidder of GSP and Vickrey-Clarke-Groves (VCG)~\cite{Bachrach:2016:MDM:2872427.2882983}, algorithmic/dynamic pricing~\cite{Chen:2016:EAA:2872427.2883089}, 
joint optimization of bid and budget allocation in sponsored search~\cite{Zhang:2012:JOB:2339530.2339716}, optimal real-time bidding for display advertising~\cite{Zhang:2014:ORB:2623330.2623633},  budget constrained bidding in keyword auctions and online knapsack problems~\cite{Zhou:2008:BCB:1504941.1505019}, {\it etc}. The exploration-exploitation strategy has been used in contextual multi-armed bandit settings for  content recommendation~\cite{Li:2016:CFB:2911451.2911548}, with further extension to  cluster-of-bandit algorithm~\cite{DBLP:conf/icml/GentileLKKZE17}  for collaborative recommendation tasks, and linear bandit algorithm~\cite{DBLP:conf/icml/KordaSL16}. Orthogonal to the above works~\cite{DBLP:conf/www/GaoKLBY18},~\cite{DBLP:conf/www/KongSY18a}, ~\cite{DBLP:conf/www/KongFSY18}, this paper presents the constraint bid optimization 
%algorithms that determine bid values
for the best conversion performance while achieving the CPA goals. The short version of this work appeared in ~\cite{DBLP:conf/www/KongSY18}.
 %while considering the budget limit in ad auctions.

%~\cite{DBLP:conf/icml/GentileLKKZE17} investigate the cluster-of-bandit algorithm for collaborative recommendation tasks
%linear bandit algorithm~\cite{DBLP:conf/icml/KordaSL16} is also applied in peer to peer networks.  
%The adaptive clustering technique based on exploration-exploitation strategies has been used in contextual multi-armed bandit settings for  content recommendation~\cite{Li:2016:CFB:2911451.2911548} , ~\cite{DBLP:conf/icml/GentileLKKZE17} investigate the cluster-of-bandit algorithm for collaborative recommendation tasks
%linear bandit algorithm~\cite{DBLP:conf/icml/KordaSL16} is also app

\subsection{Our approach} 
{\bf A motivating example} Let's say the advertiser is measuring sales for several online electronic stores by optimizing the bids based on the value of a shopping cart in total.  The goal is 50 dollars per conversion (that leads to sales) based on every dollar one spends on ads. If one is setting a target CPA goal as \$50 dollars- for every \$50 you spend on ads, you'd like to get 1 conversion (and get the corresponding revenue).  More concretely, 

Store 1: \$100 in spend $\div$ 2 conversions  = \$50  CPA goal

Store 2: \$150 in spend $\div$ 4 conversions  = \$37.5  CPA goal

Store 3:  \$500 in spend $\div$ 5 conversions  = \$100 CPA goal

Based on the data one would decide to spend more budget at store 1 and store 2, as one is hitting the goal on these websites. At store 3, the advertiser pays the double of the goal per conversion, so the campaign will stop bidding there. 

{\bf Optimization Goal} A question that naturally follows is:  \emph{how to set the bid price in order to achieve the desired CPA goals?} More formally, given the CPA goal with value $C$, we aim to provide the bid recommendation that can maximize the conversions while considering CPA goal for the advertiser $i$, {\it i.e.,}
\begin{eqnarray}
& \max_{\text{bid}}Conversion(bid);  \nonumber \\
& s.t \;\;\;\; CPA(bid) = C; \;\; \;  Spend(bid) \leq B 
\label{EQ:click_CPA}
\end{eqnarray}
where $Conversion(bid)$ denotes the number of conversions the advertiser can obtain at price $bid$, $CPA(bid)$ tells the CPA goal one is supposed to achieve at price $bid$ and $spend(bid)$ limits the true cost of advertisers\footnote{\small {For notation simplicity purpose, advertising campaign $i$ is ignored in the paper next, and all the modeling is applied for the particular campaign $i$. }} 
within the budget $B$.   An alternative approach is to maximize the revenue by considering the factor of conversion value from eliminating the bad advertising inventory, which, however can be similarly solved as that in Eq.(\ref{EQ:click_CPA}). Through this optimization, the inventory that leads to higher conversions is ranked higher, and recommended to advertisers.
We run the campaign analysis to solve Eq.(\ref{EQ:click_CPA}) using O\&O (owned and operated) advertising platform. The goal is to achieve the superior reach, find the correct targeting and perform optimization.

\section{Optimization Workflow}

We provide the solution to Eq.(\ref{EQ:click_CPA}) using machine learning and optimization. % mathematic way to determine the best price to bid for achieving CPA goal of Eq.(\ref{EQ:click_CPA}) step by step.

{\bf Step1: Get all supply impressions and predict click and conversions}  %
The best performing ad space for the ad campaign is associated with the characteristics that influence the type of users they attract, including publisher and URL (a.k.a supply group) and geo-locations. Essentially, in most performance buying scenarios, number of impression we can get from the inventory is determined by a combination of the content of the site and a specific campaign's creative and targeting, i.e., the number of impressions one can get from the supply side is proportional to the available number of impressions on the particular publisher, goe location and targeting attribution settings. Therefore, the number of impressions is given by querying the total supplies with pre-defined parameter combinations, such as geo-location, supply group, age, gender, etc.

In ads serving environment, impressions are most commonly transacted between advertisers, ad exchange and publishers on a cost-per-thousand impressions (CPM) basis. Since CPC/CPA is measured differently than CPM, %the translation between th to occur. 
translating Goals of CPA into click and conversion number prediction is necessary for advertisers, because advertisers typically pay for a particular user response via a cost-per-click (CPC) or CPM.  More formally, 
\begin{eqnarray}
Click(bid) = Impression \times Win rate(bid) \times pCTR, 
\label{EQ:click}
\end{eqnarray}
\begin{eqnarray}
Conversion(bid) = Click(bid) \times pCVR 
\label{EQ:conversion}
\end{eqnarray}
where $Click(bid)$ tells the number of clicks the advertiser can obtain at $bid$ price,  $Impression$ is the available impressions based on the combination of publisher and targeting, $Win rate(bid)$ is the winning ratio from all available impressions,  $Conversion(bid)$ is the number of conversions obtained at price $bid$,  $pCTR, pCVR$ are the predicted click-through-rate (CTR) and predicted conversion-rate (CVR) for the particular advertiser based on history campaigns.

%At such, optimization plays a factor.Note CVR is build using data generated from conversion\_feed and click\_feed in MX3, i.e,
%Note here we strike a balance between granularity and volume to ensure that we have a much better result. 
%Basically, these optimization algorithms translate
%More details of obtaining win-rate distributions are discussed in the following section. 

{\bf Step 2: CPA optimization for bid}
Now we are ready to translate the CPA goals into estimated CPC bids based
on a variety of factors, including historical win rates for a particular type of
impression, and actual cost at that bid price.   For example, if the historical rates indicate that the win rate is low, then a lower bid will be submitted.  On the other hand, if the response rate is high, then a higher bid will be submitted. 

\begin{lemma}
\begin{eqnarray}
CPA(bid)= \frac{eCPM\_cost(bid)}{1000 \times pCTR \times pCVR},
\label{EQ:CPA}
\end{eqnarray}
where $Spend(bid)$ is the money spent,  $CPC\_cost$ is the cost-per-click,  and 
$eCPM\_cost$ is the cost-per-mille at  price $bid$, respectively.
\end{lemma}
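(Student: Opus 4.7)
The plan is to derive the identity by a straight definition chase, chaining the standard pricing definitions with equations (\ref{EQ:click}) and (\ref{EQ:conversion}) already established in the paper.

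First, I would start from the accounting definition of CPA, namely the ratio of total spend to the number of conversions obtained at bid price $bid$:
\begin{equation*}
CPA(bid) \;=\; \frac{Spend(bid)}{Conversion(bid)}.
\end{equation*}
Next I would expand the numerator using the meaning of eCPM (effective cost per mille, i.e., per thousand impressions). Since the advertiser only pays on impressions that are won, the spend equals the number of won impressions times the per-impression cost; converting eCPM into a per-impression cost introduces the factor of $1/1000$:
\begin{equation*}
Spend(bid) \;=\; Impression \times Winrate(bid) \times \frac{eCPM\_cost(bid)}{1000}.
\end{equation*}

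Then I would expand the denominator by composing Eq.(\ref{EQ:click}) with Eq.(\ref{EQ:conversion}):
\begin{equation*}
Conversion(bid) \;=\; Click(bid) \times pCVR \;=\; Impression \times Winrate(bid) \times pCTR \times pCVR.
\end{equation*}
Substituting both expressions into the ratio, the common factor $Impression \times Winrate(bid)$ appears in both numerator and denominator and cancels, yielding the claimed identity. The step I would flag as needing the most care is not a mathematical obstacle but a modeling assumption: the cancellation is clean only if the same $Impression$ and $Winrate(bid)$ describe both spend and conversions, i.e., we are charging on the same pool of won impressions from which clicks and conversions are later realized, and $pCTR$, $pCVR$ are treated as population-level rates independent of $bid$. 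Once this is made explicit, the lemma follows immediately with no residual calculation.
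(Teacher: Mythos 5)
Your proposal is correct and follows essentially the same route as the paper: both start from $CPA(bid)=Spend(bid)/Conversion(bid)$, expand numerator and denominator by the standard pricing definitions, and cancel a common factor. The only cosmetic difference is that you account for spend at the impression level (won impressions times $eCPM\_cost/1000$) and cancel $Impression\times Winrate(bid)$, whereas the paper accounts at the click level ($Click(bid)\times CPC\_cost(bid)$), cancels $Click(bid)$, and then converts $CPC\_cost$ to $eCPM\_cost$ via $CPC\_cost = eCPM\_cost/(1000\times pCTR)$; the two bookkeepings are equivalent.
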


\begin{proof}
\begin{eqnarray}
CPA(bid) &=& \frac{Spend(bid)}{Conversion(bid)} = \frac{Click(bid) \times CPC\_cost(bid)}{Click(bid) \times pCVR} \nonumber \\
&=&  \frac{CPC\_cost(bid)}{pCVR}
=  \frac{eCPM\_cost(bid)}{1000 \times pCTR \times pCVR} .
%& pCVR = \#pConversion/ \#pClick, \nonumber \\
\end{eqnarray}
\end{proof}

For example, if the advertiser is willing to spend \$3 per conversion and the likelihood of wining clicks and conversions is 0.01\%. Since advertising inventory is priced in cost-per-thousand impressions, the CPM winning cost should be \$0.30. If the probability of a response is higher, the
bid is increased.  High CPM impressions are worth every penny if they result in converting users for a
lower CPA/CPC.  %Please keep in mind that 
It is not the cost of the impression that matters, instead the cost of the acquisition does matter.  Please note when the spend of auction exceeds the budget, even if the advertiser
bids more, it is not necessarily increasing cost due to the inventory capacity. 

\begin{lemma}
The optimal solution to Eq.(\ref{EQ:click_CPA}) is given by:
\begin{eqnarray}
bid^* = \Big( arg_{bid} C &=&  
\frac{eCPM\_cost(bid)}{1000 \times pCTR \times pCVR} \Big), 
\end{eqnarray}
where Spend(bid) is within budget limit, 
\begin{eqnarray}
 Spend(bid) = Click(bid) \times  \frac{eCPM\_cost(bid)}{1000 \times pCTR}  \leq B
\end{eqnarray}
and 
\begin{eqnarray}
Conversion(bid) = Click(bid) \times pCVR
\end{eqnarray}
\end{lemma}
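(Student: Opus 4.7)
The plan is to derive the stated characterization by combining the equality constraint in Eq.(\ref{EQ:click_CPA}) with the closed-form expression for $CPA(bid)$ provided by the preceding lemma. First I would substitute $CPA(bid) = eCPM\_cost(bid)/(1000 \times pCTR \times pCVR)$ into the equality constraint $CPA(bid) = C$, producing the implicit equation $C = eCPM\_cost(bid)/(1000 \times pCTR \times pCVR)$; the optimal $bid^{*}$ announced in the lemma is precisely the argument solving this equation (hence the $arg_{bid}$ notation), with $pCTR$ and $pCVR$ treated as fixed per-campaign predictions so that only $eCPM\_cost(bid)$ varies as a function of $bid$.

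Next, I would translate the budget constraint into the form stated. Observing that $CPC\_cost(bid) = eCPM\_cost(bid)/(1000 \times pCTR)$ — an intermediate identity already used in the previous lemma — the total spend factors as $Spend(bid) = Click(bid) \times CPC\_cost(bid) = Click(bid) \times eCPM\_cost(bid)/(1000 \times pCTR)$, matching the formula in the statement. The inequality $Spend(bid) \leq B$ is then a verbatim restatement of the second constraint in Eq.(\ref{EQ:click_CPA}), and the conversion expression $Conversion(bid) = Click(bid) \times pCVR$ is just Eq.(\ref{EQ:conversion}) applied at $bid = bid^{*}$.

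The main conceptual point — and the only real obstacle — is that the equality $CPA(bid) = C$ does not automatically single out a unique conversion-maximizing bid. If $eCPM\_cost(bid)$, and hence the composite $CPA$ function, is monotonic in $bid$, the root is unique and the optimality argument is immediate; otherwise multiple bids may satisfy the equality, and one must select the one yielding the largest $Conversion(bid)$ among those respecting the budget. Since $Click(bid) = Impression \times Win\,rate(bid) \times pCTR$ is non-decreasing in $bid$ through the win-rate factor, this selection reduces to picking the largest feasible root. The $arg_{bid}$ notation implicitly captures this tie-breaking; a fully rigorous treatment would require mild structural assumptions on the bid-landscape functions, which the paper's non-parametric model encodes empirically rather than imposing analytically.
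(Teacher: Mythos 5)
Your derivation is correct and follows exactly the route the paper intends: the paper in fact states this lemma without any explicit proof, and the justification is precisely your substitution of $CPA(bid) = \frac{eCPM\_cost(bid)}{1000 \times pCTR \times pCVR}$ from the preceding lemma into the constraint $CPA(bid)=C$, together with $Spend(bid) = Click(bid)\times CPC\_cost(bid)$ and $CPC\_cost(bid) = \frac{eCPM\_cost(bid)}{1000\times pCTR}$. Your closing observation --- that $arg_{bid}$ is only well-defined once one assumes monotonicity of the landscape functions or adds a tie-breaking rule selecting the conversion-maximizing feasible root --- is a genuine gap in the paper's own statement that your write-up handles more carefully than the original.
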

Note the above optimization requires pCTR/pCVR prediction\footnote{Predicted CTR and CVR depends on many features (not necessarily linear dependent on bid price), which will be illustrated in the following subsection.}, as well as bid landscape prediction (including eCPM\_cost and winrate).

{\bf pCTR/pCVR prediction} As is shown in many previous works, CTR/CVR prediction has been well studied in sponsored search advertising \cite{Graepel2010, McMahan2013, Zhu2010, Edizel2017%, DBLP:conf/icml/GraepelCBH10
}, contextual advertising \cite{Ribeiro-Neto2005, Tagami2013}, display advertising \cite{Chapelle2014, Yan2014, Lee:2012:ECR:2339530.2339651}, and native advertising \cite{Li2015, He2014}. A common objective of these models is to accurately predict the probability of an ad being clicked once it is shown to end users. In our work, we incorporate history CTR/CVR observations with advertiser attributes (such as targeting location, device information, category and textual information, bid price, etc) for better prediction of CTR/CVRs.  We build a convolution neural network model~\cite{kongctr} for CTR prediction using robust feature learning\footnote{For cold-start ad without any history information, we adopt the exploration-exploit approach~\cite{Shah:2017:PES:3097983.3098041} to address the sparsity issue in conversion rate (e.g., 1\% conversions in 5\% of clicks) prediction.}. 
%. For online serving, we build a logistic regression model with well-crafted features to meet the real-time serving purpose.  For cold-start ad without any history information, we adopt the exploration-exploit approach~\cite{Shah:2017:PES:3097983.3098041} to address the sparsity issue in conversion rate (e.g., 1\% conversions in 5\% of clicks) prediction. %Due to space limit, we omit more discussions in this paper. 

{\bf Bid landscape model prediction} Bid landscape model gives an overview of how 
many clicks one can win if the advertiser bids at a different price, which gives the picture of the advertiser auction behaviors in the marketplace. In this work, we use two components to 
describe the bid landscape model, including {\bf (i) win-rate model}  and {\bf (ii) eCPM\_cost model} which will be illustrated in detail in the next section.  These models are used for obtaining eCPM\_cost of Eq.(\ref{EQ:CPA}) and win\_rate of Eq.(\ref{EQ:click}), respectively.

 The history win rate, CPC cost, click through rate, conversion rate (CVR) on the advertisers and inventory become the observations used to determine future bids.  The learning process provides the estimation based on past data aggregated from other similar campaigns.  On the other hand it should be brief that some advertisers do not continue to spend money on impressions that don't perform in auctions, and therefore the learn phase needs to be long enough to collect adequate data.  %to run an efficient campaign 
Chances are that we've collected a treasure chest of valuable data that can help predict more accurately over the incoming ads campaigns. In our practical setting, we always roll up our model and incorporate the new coming data everyday into the current infrastructures to turbocharge the learning process, which we call "roll up" learning process. 

%In Section 2.3 and 2.4, you describe how to compute CTR/CVR prediction and the winning rate distribution, but how to use these information to solve optimization goal in Eq. (1)? Is there any connection between Section 2.3 and 2.4 to your problem statement? I can guess how you apply them to find the optimal solution, but it would be better to clearly describe it in the paper.

{\bf Step 3: Enforcing the budget constraint}
%We discuss how the budget constraint will affect the optimal solution, {\it i.e.} check whether 
\iffalse
\begin{eqnarray}
Spend(bid^*) \leq B  
\end{eqnarray}
is satisfied or not. 
\fi
If the optimization solutions obtained from Eq.(\ref{EQ:CPA}) satisfy the budget constraint, then we can simply return the recommended bid for advertisers and the optimization is done.  If not, the optimal solution to achieve the CPA goal does not exist.  There are two approaches that recommend advertisers to make adjustment: 

{\bf (i) Adjust the budget} Recommend the updated budget (usually increased budget $B'$) to advertisers, {\it i.e.,}
\begin{eqnarray}
B' = Spend(bid^*). 
\end{eqnarray}

{\bf (ii) Adjust the CPA goal} Recommend the updated CPA goal $C'$  the advertiser can achieve with the given budget, {\it i.e.,}
 \begin{eqnarray}
C' = CPA(bid'),  \;\;\;\;  \text{where} \;\;\;\;   bid' = (\arg_{bid} Spend(bid) = B ).
 \end{eqnarray}

%imulate how the auctions come we independently process all the auction information from the raw logs. %therefore we can utilize the bid price information one by oneWe consider all the participated  campaigns, and use Generalized Second Price (GSP) to generate the win-rate curves. We call our win-rate as \emph{gross win rate}\footnote{This is also known as ads throttling which indicates how frequently or how recently a particular user has seen an ad.},because we consider all the offers.  We also do not assume the bid price has sequence order, and therefore we can utilize the bid price information one by one. Another major contribution is to estimate the curve distributions.  

%%%%%========
%in real-world auctions where different biddings, in fact, could be totally independent. For example, the win rate of bid larger than 3 is independent of that of bid larger than 1.  
%The competitors, and the offers who are willing to bid determine the win-rate.  The sequence order assumption of bidding price is too strong to reflect the \emph{indepedent} behaviors of advertisers. In ads marketplace, if the bid $b$ wins the auction, the winning price $z$ is observed, which is analogous to the observation of the patient's death on day $z$. If the bid $b$ loses the auction, one only knows the winning price $z$ is higher than $b$, which is analogous to the patient's left from the investigation on day $b$. However, there are several limitations when using this model to compute the win-rate. % reasons that made us not use this model.
%%%%%%=======

\section{Bid Landscape Model Learning}

%Detailed provided in section 2.4.1 for modeling eCPM bid ranges are inadequate and unclear. Author proposes to estimate eCPM ranges by comparing adjacent and non-adjacent position and creating eCPM upper and lower bound for multiple context. Translating these ranges to bins requires more details for reader to understand complete process.

Bid Landscape performs like a ``bid simulator" to simulate how the auctions run in ad marketplace. In particular, bid landscape model looks at the specific auctions the advertiser participated in during the past to estimate the performance.  
%Learning from the history performance to make smart bidding, however, still provides the most valuable information, and is the only way to effectively handle infinite data combinations and estimate the campaign performance\footnote{\small{This would be the best solution if there is not significant differences from the past, i.e.,  the future performance similar to the current context and current competition.}}. 
%Performing simulation with all observations for smart bidding, we build a bid landscape model that aims to capture the market performance. %{\bf Limitations of existing models} 
In bid landscape models, win rate is a percentage metric which measures winning impressions over the total number of impressions.  For example, high win rate indicates low competition. The winning advertisers only need to pay the second highest bid price in a GSP auction. One existing model is survival model, which models the bid price during the investigation period from low price to high price as the patient's underlying survival period~\cite{DBLP:conf/kdd/ZhangZWX16}. However, it misses many unobserved prices in bid auctions. Further, it is counter-intuition that the win rate does not necessarily increase as the bid increases from ~\cite{DBLP:conf/kdd/ZhangZWX16}.  Lastly, the bidding price in auctions is not a sequential decision-making process as is assumed in survival model.

{\bf Overview }
\emph{The key idea is modeling eCPM bid ranges from comparing adjacent and non-adjacent positions and creating eCPM upper and lower bound for multiple context.  Then we translate these ranges to bins based on eCPM bid and winning opportunity probability density distributions.}  It consists of several steps:

{\bf Step 1:} Characterize the ranges of eCPM bid using the lowest possible eCPM bid ($eCPM\_dn$) and highest possible eCPM bid (denoted as $eCPM\_up$) which satisfies:
$eCPM\_dn  \leq eCPM \;\; bid \leq eCPM\_up$, and then derive the corresponding $eCPM\_cost$ based on GSP.  If the true eCPM bid is still lower than 
$eCPM\_dn$, one cannot win the ad display opportunity. On the other hand, if the true eCPM bid is higher than $eCPM\_up$, the advertisers will always win the first position but actually pay far less than the bid, which seldom happens in real-world. The lowest eCPM and highest eCPM give the range of possible eCPM that the advertiser can win the auction. (\S 4.1)

%Basically, the $eCPM\_dn$ and $eCPM\_up$ give the lower and upper bound of bid learned from %advertising auctions. (\S 3.1)

%In Section 2.4 and step 1, can you explain how you find the lowest possible eCPM bid and highest possible eCPM bid? Do you mean the lowest eCPM and highest eCPM that can win the auction? Also, how you derive the corresponding eCPM_cost based on GSP ? The eCPM_cost depends on your bid price, which bid_price you want to use here?

{\bf Step 2:} Generate Win-rate distributions over the eCPM bid price, {\it i.e.,} what is the percentage of winning impressions if one bids at the given eCPM bid price? The general idea is to generate the probability density function (\texttt{p.d.f}) and cumulative density function (\texttt{c.d.f}) for eCPM\_up, eCPM\_dn , respectively. Then the most likely win rate can be computed based on density distribution of winning opportunities. (\S 4.2)

{\bf Step 3:} Generate eCPM cost distributions over the eCPM bid price, {\it i.e.,} what is the most likely eCPM cost if one bids at the given eCPM bid price? The general idea is to generate the probability density function (\texttt{p.d.f}) and cumulative density function (\texttt{c.d.f}) for eCPM\_up, eCPM\_dn and eCPM cost lowest price and highest price, respectively. Then the most likely eCPM cost corresponding to current bid is computed based on density distribution of bid price. (\S 4.3)

{\bf Discussion}  We do not assume the bid price comes in sequence order, and therefore we are much closer to \emph{simulate} how different offers compete in an auctions independently.  Further the computed win-rate is also known as ``gross win rate'' by considering all the participated campaigns with the ad throttling rate that indicates how frequently or how recently a particular ad is allowed in an auction.  Moreover, we can ensure the higher bid gets more chance to win as in true auctions. 

%%%%========
%%%%========
%\fbox{\parbox{1\linewidth}{
\SetKwInOut{Parameter}{Parameters}
%\begin{flushright}
\KwIn{ Arrays of \{advertiser\}, \{context\}, \{position\}, \{pCTR\}, \{CPC bid\}, \{CPC cost\}, \{ranking\_score\}}
 \Parameter{max: maximum value of ecpm bid, {\it e.g., \$9.99}}
 \KwOut{Array of \{advertiser, context, position, ecpmup, ecpmdn, ecpm cost \}} 
%\captionof{algorithm}{My algorithm}
%\begin{algorithmic}
\begin{algorithm}
$N = len(position)$ \;
\For{$i\ge 0$ \KwTo $n$ }{
\For{$j \ge 0$ \KwTo $n$ }{
\uIf{  $j==i$}{
    $ ecpmup = j>1 ? \frac{score[j-1]}{score[i]} \times bid[i] \times pctr[i] : max $\;
    $ ecpmdn = j<n ? \frac{score [j+1]}{score[i]} \times bid[i] \times pctr[i] : bid[i] \times pctr[i]$ \;
     }
  \uElseIf{ $j>i$}{
     $ecpmup =j>1? \frac{score[j]}{score[i]} \times  bid[i] \times pctr[i]: max  $ \;
     $ ecpmdn = j<n ? \frac{score[j+1]} {score[i]} \times bid[i] \times pctr[i] : bid[i] \times pctr[i]$  \;
   %  $ ecpmcost = cost[i] \times pctr[i] $. \;
  }
  \Else{
            $ecpmup = j>1 ?  \frac{score[j-1]}{score[i]} \times bid[i] \times pctr[i] : max$\;
            $ecpmdn = j<n ? \frac{score[j]}{score[i]} \times bid[i] \times pctr[i] : bid[i] \times pctr[i]$  \;
        %    $ecpmcost = cost[i] \times pctr[i]$   \;
  }
 %  \If{$ecpmup > max$}
  % { $ecpmup = max$ \;} 
  % \If{$ecpmdn > ecpmmax$}
  % { $ecpmdn = max$ \;}
   $ ecpmcost = cost[i] \times pctr[i]$  \;
    \If{$ecpmup>=ecpmdn$}
    { 
   \Return advertiser[i], context[i], position[j],  ecpmup,  ecpmdn,  ecpmcost
   }
}
}
\caption{Characterizing eCPM bid ranges and eCPM cost}
\end{algorithm}
\label{alg:ecpmbid}
%\end{algorithmic}
%}
%}
%}
%\end{flushright}
%}
%%%%========
%%%%========
\subsection{Characterizing the eCPM bid ranges}

One important observation is that ranking score provides the relative ranking of the neighboring positions finally used for ranking and pricing. Therefore, the key idea of our approach is to obtain the eCPM upper and lower bound by looking at the advertisers that appear at the neighboring positions.  More formally, the $j$-th position eCPM bid upper bound is given by $(j-1)$-th position bid, and eCPM bid lower bound is given by $(j+1)$-th position bid based on GSP because competitions occur at the neighboring positions, {\it i.e.,}
{
\begin{eqnarray}
eCPM\_up[j]  = \frac{ranking\;\;score[j-1]} {ranking\;\; score[j]} \times bid[j] \times pCTR[j], \nonumber \\
eCPM\_dn[j]  =  \frac{ranking\;\; score[j+1]} {ranking\;\; score[j]} \times bid[j] \times pCTR[j],
\end{eqnarray}
}
where $score[j]$ denotes j-th position ranking score, $pCTR[j]$ denotes the j-th position predicted CTR, $bid[j]$ denotes j-th position bid.  Notice that in practical auctions,   even if positions $i$ and $j$ are not neighbors, %competitions are not limited to the neighboring position, but applied to 
we can still get useful information regarding the upper and lower bound for pricing. Actually, for any two positions $i, j$, we have:  %for establishing eCPM\_up and eCPM\_dn relations, {\it i.e., }
{
\begin{eqnarray}
 \forall j>i , \;\;\;\; ecpmup[j] =  \frac{ranking\;\;score[j]} {ranking\;\; score[i]} \times bid[i] \times pCTR[i], \\
 \forall j<i, \;\;\;\;  ecpmdn[j] = \frac{ranking\;\;score[j]} {ranking\;\; score[i]} \times bid[i] \times pCTR[i].
\end{eqnarray}
}

This, in fact, provides opportunity to explore all the available logs to infer the eCPM bid ranges and corresponding cost by traversing all pairwise advertisers with adjacent or non-adjacent positions. Alg.~\ref{alg:ecpmbid} presents the details of characterizing the eCPM bid ranges.  The time complexity of this algorithm is $\mathcal{O}(n^2)$, where $n$ is the number of possible positions. For example, given a simple auction with three advertisers, we have:

{

\texttt{array of Advertisers:  \{9192982670, 9620472854, 9575604786\}}

\texttt{array of context:  \{1\_mobile,  1\_desktop,  1\_mobile\},  }

\texttt{array of Positions: \{1,2,3\}, }

\texttt{array of ranking scores:  \{3.117E-4, 2.387E-4, 2.312E-4\}},

\texttt{array of Bid: \{5.0, 1.581, 0.5\}, }

\texttt{array of Cost: \{0.31,0.5, 0.45\},  }

\texttt{array of pCTR:  \{0.002588, 8.0119E-4, 5.7167E-4\}}
}

\begin{figure}[t]
	\centering
	\includegraphics[height=1.4in,width=0.21\textwidth]{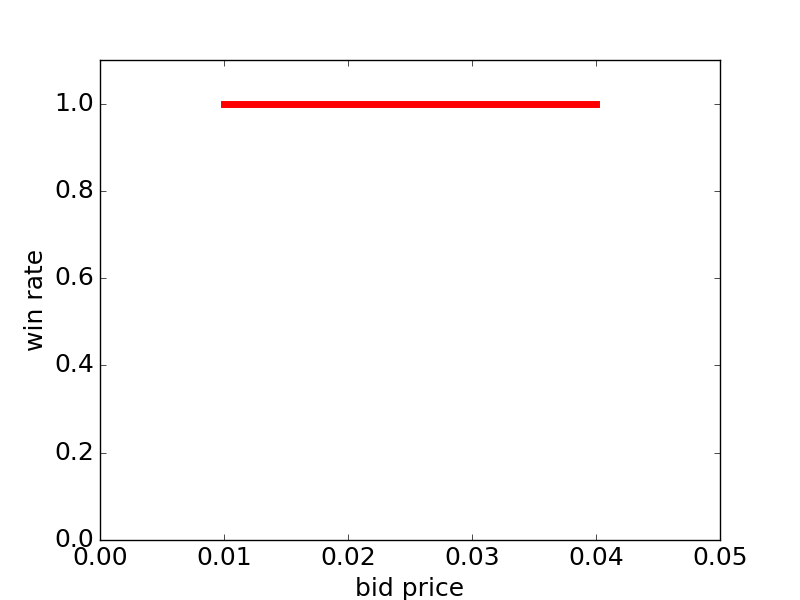}
	\includegraphics[height=1.4in,width=0.21\textwidth]{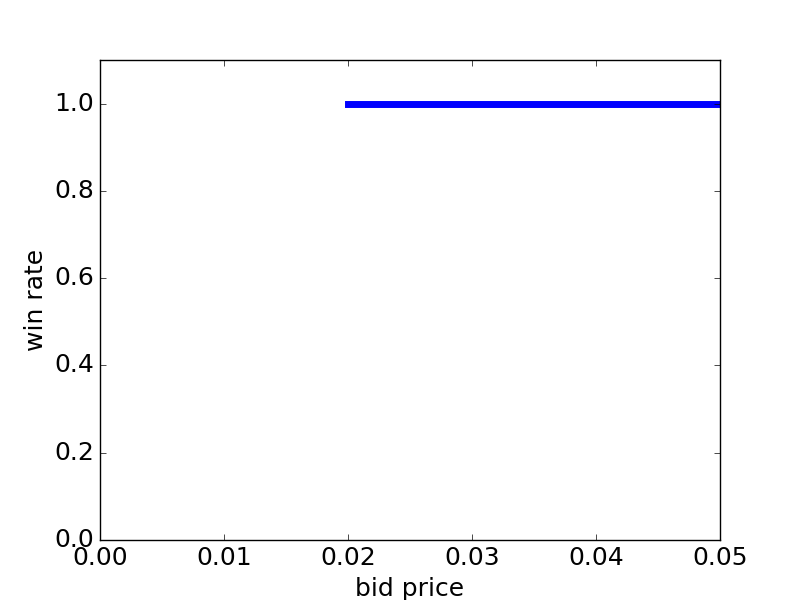}
	\includegraphics[height=1.4in,width=0.21\textwidth]{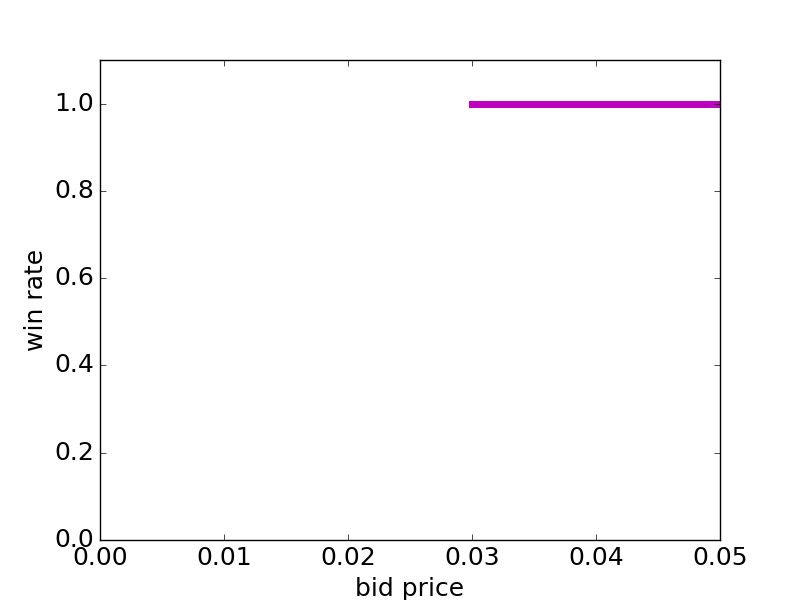}
	\includegraphics[height=1.4in,width=0.21\textwidth]{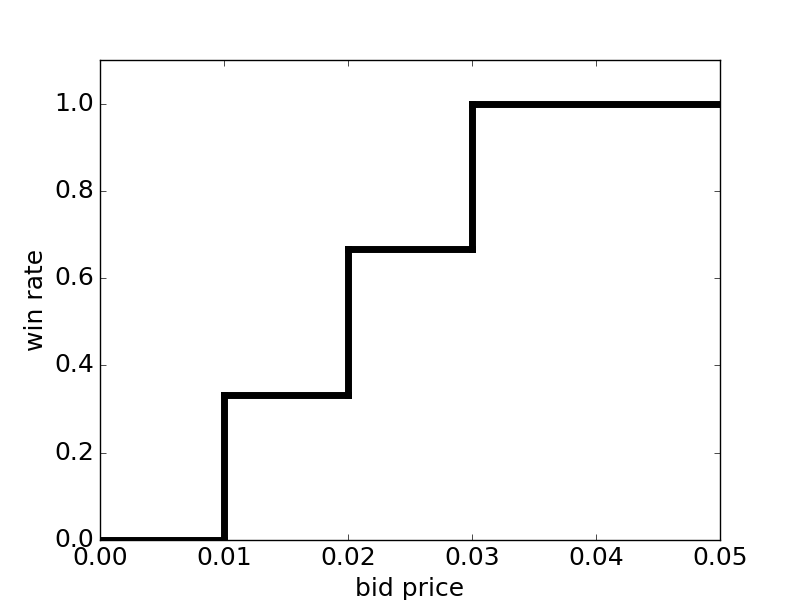}
		\caption{\small{The first three figures demonstrate the three observations with \{eCPM up, eCPM down, eCPM cost\}, and the bottom-right figure is the generated win-rate distribution over eCPM bid price. }
 }
	\label{fig:winrate}
\end{figure}
%++++++++++++++++++++

The output will be the complete arrays of \{advertiser, context, position, ecpmup, ecpmdn, ecpm cost \} by considering different position combinations used for building bid-landscape, {\it i.e.,}

{

\texttt{position = 1}

\texttt{ \{9192982670, 1\_mobile, 1, 9.99,0.0099,0.00080\} },  %.0129433

\texttt{ \{9620472854, 1\_desktop, 1,  9.99,0.001654,0.0004\} }, %0.0129433

\texttt{ \{9575604786, 1\_mobile, 1, 9.99,0.0003852,0.00026\}}; %0.0129433

\texttt{position = 2}

\texttt{ \{9192982670, 1\_mobile, 2, 0.0099,0.00960,0.000802\}},

\texttt{ \{9620472854, 1\_desktop, 2, 0.00165,0.00122,0.000400\}}, 

\texttt{ \{9575604786, 1\_mobile, 2, 0.000385,0.000294,0.000257\}}; %,0.00126669

\texttt{position = 3}

\texttt{ \{9575604786,  1\_mobile, 3,0.000294,0.000285,0.000257\}}. %,0.000285835
}
%GenerateWrHistogram

\subsection{Win-rate distribution over eCPM bid}

Given the learned eCPM bid lower and upper bound, eCPM cost, we can generate win-rate distributions over eCPM bid price. The general idea is to compute the ratio of winning auctions during each available eCPM bid price.  First we bucket the eCPM bid highest and lowest arrays into evenly distributed bins, and get the corresponding {\bf \emph{index}} for each bid price. For example, if the bin size = \$0.01,  bid price \$0.05 falls into the fifth bin due to $\frac{0.05}{0.01}=5$.  Given the distribution of bid prices,  it is easy to get the probability density distributions  (\texttt{p.d.f} ) of eCPM upper and lower bounds, followed by the corresponding cumulative density distribution  (\texttt{c.d.f} ) of them.  Finally win-rate is given by:

%%========
{
\begin{lemma}
Win rate at different bid price with $index = \frac{bid}{bin\_size}$ is given by: 
\begin{eqnarray}
% index &=& \frac{bid}{bin\_size} \nonumber \\
winrate(index) = \frac{ \text{ c.d.f of ecpm\_dn (index) - c.d.f of ecpm\_up (index) } }  {n}, \nonumber
\end{eqnarray}
where $n$ is the number of observations and c.d.f of ecmp\_up and ecpm\_dn is computed based on CPM bid distributions\footnote{\small {Here c.d.f and p.d.f are similar to histograms, which are finally normalized by division of $n$ observations.}}.
\end{lemma}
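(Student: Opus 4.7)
The plan is to view each row produced by Algorithm~1 as an independent historical ``opportunity'' whose interval $[\text{ecpm\_dn},\,\text{ecpm\_up}]$ encodes exactly the set of eCPM bids at which that opportunity would have been won at the associated position under GSP. With this reading in place, the winrate at a query bid $b$ is just the empirical fraction of opportunities whose interval contains $b$, and the lemma becomes a counting identity expressing this fraction as a difference of two empirical c.d.f.\ values.

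First I would fix a bid level $b$ (equivalently the index $b/\text{bin\_size}$) and partition the $n$ recorded intervals into three disjoint classes according to where $b$ falls: (i) those with $\text{ecpm\_up}_i \le b$, in which the bid already exceeds the upper boundary (so this particular slot event is not a win at $b$); (ii) those with $\text{ecpm\_dn}_i \le b < \text{ecpm\_up}_i$, which are exactly the wins; and (iii) those with $b < \text{ecpm\_dn}_i$, where the bid is too low to capture the slot. Algorithm~1 explicitly enforces $\text{ecpm\_up}_i \ge \text{ecpm\_dn}_i$ via its final \texttt{if} guard, so these three classes genuinely form a partition and the count of class (ii) is well defined.

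Second, because $\text{ecpm\_dn}_i \le \text{ecpm\_up}_i$ on every surviving row, the event $\{\text{ecpm\_up}_i \le b\}$ implies $\{\text{ecpm\_dn}_i \le b\}$, so set-theoretically
\[
|\{i : \text{ecpm\_dn}_i \le b < \text{ecpm\_up}_i\}| \;=\; |\{i : \text{ecpm\_dn}_i \le b\}| \;-\; |\{i : \text{ecpm\_up}_i \le b\}|.
\]
Each term on the right is, by the convention stated in the footnote, the unnormalized empirical c.d.f.\ (cumulative histogram count) of the corresponding bin array evaluated at the index $b/\text{bin\_size}$. Dividing both sides by $n$ yields precisely the empirical frequency of winning at bid $b$, which is the claimed formula for $\text{winrate}(\text{index})$.

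The main obstacle I expect is not the counting step but the interpretive step one: rigorously justifying that the interval Algorithm~1 extracts from neighboring ranking scores really coincides with the set of eCPM bids at which the historical event would have been won at that slot under GSP, rather than a strict subset or superset. This reduces to the monotonicity observation that sorting by ranking score agrees with sorting by $\text{bid}\times\text{pCTR}$ on the contested slots, together with the GSP property that the pivotal competitor for a given slot is always the bidder immediately above or immediately below. I would isolate this as a short supporting fact before invoking the partition argument above.
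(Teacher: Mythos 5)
Your proposal is correct and follows essentially the same route as the paper, which justifies the lemma by observing that only observations whose interval $[\text{ecpm\_dn}_i,\text{ecpm\_up}_i]$ contains the bid count as wins, so the win rate is the fraction of such intervals, expressed as a difference of the two cumulative histograms divided by $n$. Your version merely makes the paper's one-sentence rationale explicit via the three-way partition and the inclusion $\{\text{ecpm\_up}_i \le b\} \subseteq \{\text{ecpm\_dn}_i \le b\}$, which is a faithful (and slightly more rigorous) rendering of the same counting argument.
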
 %Finally, win-rate is computed per index, and is given by: 
}

The rationality behind this algorithm is that \emph{only} the ad campaigns with bidding price falling between the lower and upper bound of eCPM bid can win the auction, and therefore the number of campaigns falling into this range divided by the total number of observed campaigns gives the correct ratio of win-rate. %Since any ads campaign in the logs will contribute for this computation equally, and therefore there is no censor problem. To summarize, 
Alg.~\ref{Alg:winrate} presents the detailed  learning algorithm.  For example, given three observations of the example shown in \S 4.1, we have 
arrays of \{eCPM up, eCPM down, eCPM cost\} shown as follows with $bin\_size=0.01$:  

{
\texttt{Observation 1: \{0.04, 0.01, 0.008\}  },

\texttt{Observation 2:  \{0.05, 0.02, 0.015\} },

\texttt{Observation 3: \{0.05, 0.03, 0.02\}},
}

The win-rate model training process is presented in Table~\ref{tbl:winrate} and the learned win-rate model is plotted in the bottom-right figure of Fig.\ref{fig:winrate}.
%we feed the win-rate model learning algorithm using the input with format 
%%%==================
%%+=======
%%========
{
\SetKwInOut{Parameter}{Parameters}
%\begin{flushright}
%\fbox{\parbox{0.9\linewidth}{
%\begin{algorithm}
%\captionof{algorithm}{My algorithm}
\KwIn{Array of \{eCPM up, eCPM down, eCPM cost\}}
 \Parameter{n: the number of observations in array;  bin\_size: the interval of eCPM bid, say=0.01; max: the largest index of bid given the bin\_size}
 \KwOut{Win-rate array at each observation $i$} 
%\begin{algorithmic}
\begin{algorithm}
\For{$i\ge 0$ \KwTo $n$ }{
indmin = ecpm dn[i]/bin\_size; \\
indmax = ecpm up[i]/bin\_size;\\
  \If{$indmax > max $}
   { $max = indmax$
 }
   \If{$indmin > 0 \;\; and\;\;  indmax>0$}
   { $\texttt{p.d.f}\;\; dn[indmin] ++ $ \;
    $\texttt{p.d.f} \;\; up [indmax] ++ $ \; }
   $i := i + 1$
 }
\For{$j\ge 0$ \KwTo $max$ }{ 
    $\texttt{c.d.f} \;\; dn[j] += \texttt{p.d.f} \;\;dn [j]$ \;
     $\texttt{c.d.f} \;\; up [j] += \texttt{p.d.f} \;\; up [j]$ \;
     $ winrate[j] :=  ( \texttt{c.d.f} \;\; dn [j] - \texttt{c.d.f} \;\; up [j])/n $ \;
     $j := j + bin\_size$
    }
   \Return Win-rate array
\caption{Win-rate distribution learning}
\label{Alg:winrate}
%\end{algorithm}
%\end{algorithmic}
\end{algorithm}
}
%}
%\end{flushright}
%}
%%+=======

{
\begin{center}
\begin{table}
  \begin{tabular}{ c |  c | c | c | c | c  }
    \hline \hline
    eCPM bid  & 0.01 & 0.02 & 0.03 & 0.04 & 0.05 \\ 
    \hline 
    Index & 1 & 2 & 3 & 4 & 5 \\
    \hline
    \texttt{p.d.f} of ecpm down & 1 & 1 & 1 & 0 & 0 \\
     \hline
  \texttt{p.d.f} of ecpm up & 0 & 0 & 0 & 1 & 2 \\
    \hline
  \texttt{c.d.f}  of ecpm down & 1 & 2 & 3 & 3 & 3 \\
    \hline
  \texttt{c.d.f}  of ecpm up & 0 & 0 & 0 & 1 & 3 \\
   \hline \hline
  \end{tabular}
  \caption{Win rate model training process.  p.d.f and c.d.f for all ecmp up and dn prices are finally normalized by $n$ observations.}
    \label{tbl:winrate}
\end{table}
\end{center}
}
%%%==================
%++++++++++++++++++++
%%%==================
\begin{figure}[t]
	\centering
	\includegraphics[height=1.4in,width=0.21\textwidth]{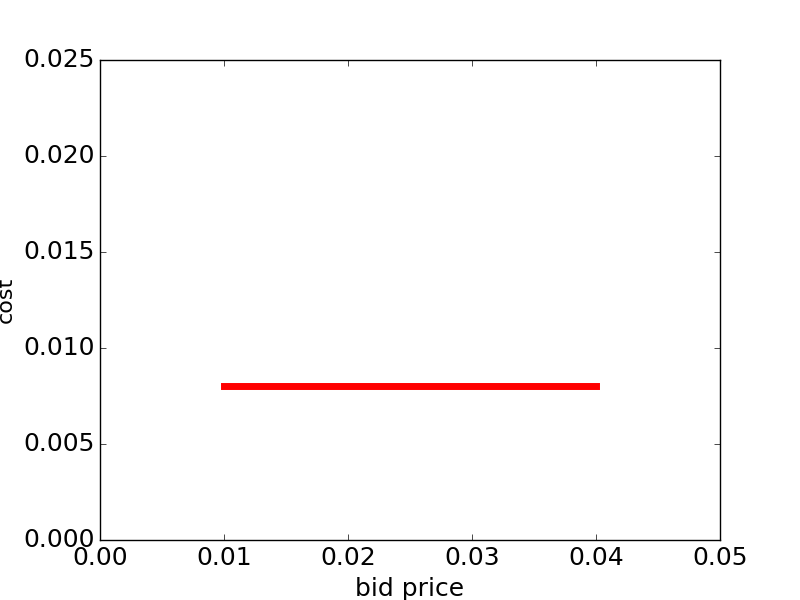}
	\includegraphics[height=1.4in,width=0.21\textwidth]{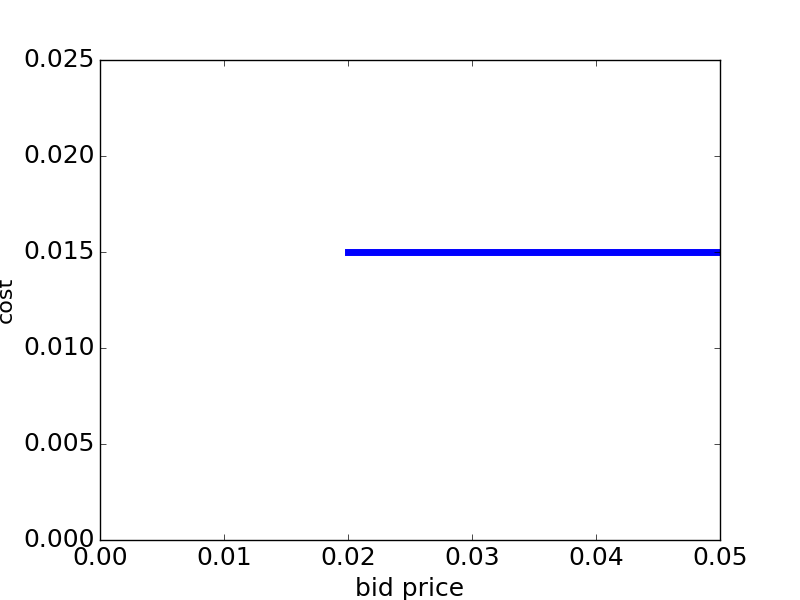}
	\includegraphics[height=1.4in,width=0.21\textwidth]{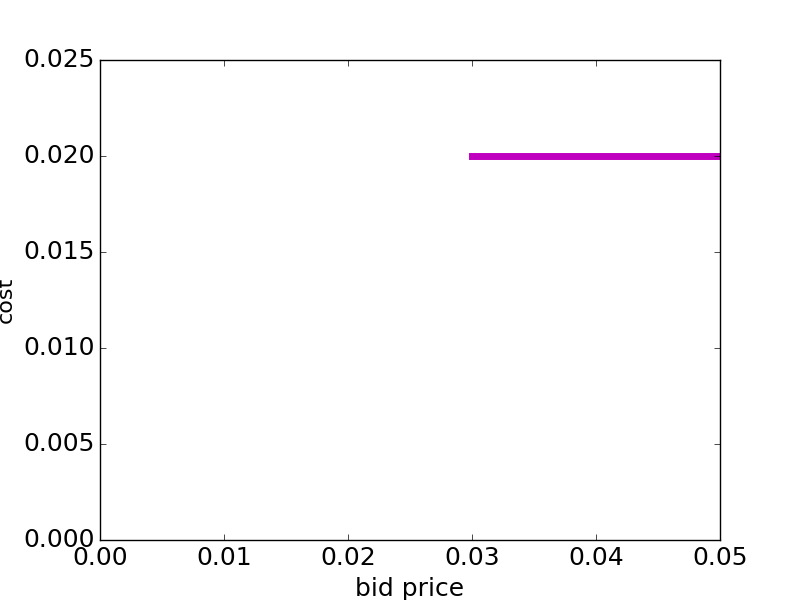}
	\includegraphics[height=1.4in,width=0.21\textwidth]{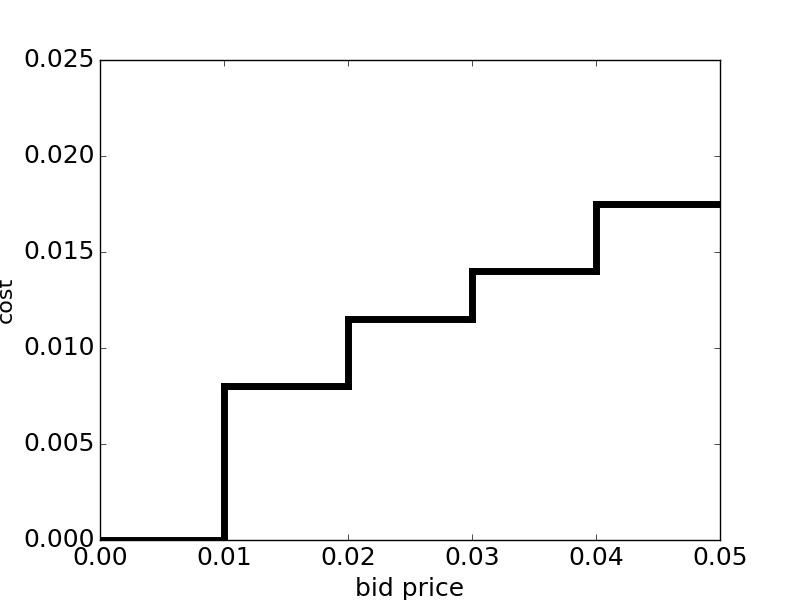}
		\caption{\small{The first three figures demonstrate the three observations with \{eCPM up, eCPM down, eCPM cost\}, and the bottom-right figure is the generated eCPM cost distribution over eCPM bid price. }  }
	\label{fig:cost}
\end{figure}
%++++++++++++++++++++
%%%==================
{
\begin{center}
\begin{table}
  \begin{tabular}{ c |  c | c | c | c | c }
    \hline
    eCPM bid price & 0.01 & 0.02 & 0.03 & 0.04 & 0.05 \\ 
    \hline  \hline
    Index & 1 & 2 & 3 & 4 & 5 \\
    \hline
\texttt{p.d.f} of ecpm down & 1 & 1 & 1 & 0 & 0 \\
     \hline
\texttt{p.d.f} of ecpm up & 0 & 0 & 0 & 1 & 2 \\
    \hline
\texttt{p.d.f} ecpm cost down & 0.008 & 0.015 & 0.02 & 0 & 0 \\
    \hline
\texttt{p.d.f} of ecpm cost up  & 0 & 0 & 0 & 0.008 & 0.035 \\
    \hline
   \texttt{c.d.f} of ecpm down & 1 & 2 & 3 & 3 & 3 \\
    \hline
   \texttt{c.d.f}  of ecpm up & 0 & 0 & 0 & 1 & 3 \\
    \hline 
   \texttt{c.d.f}   of cost down & 0.008 & 0.023 & 0.043& 0.043 & 0.043 \\
    \hline
   \texttt{c.d.f}  of cost up &  0 & 0 & 0 & 0.008 & 0.043 \\ 
    \hline 
   \hline
  \end{tabular}
 \caption{\small eCPM cost distribution learning algorithm}
\label{tbl:cost} 
  \end{table}
   \end{center}
}
%++++++++++++++
%%%%%===

\subsection{eCPM cost distribution over  eCPM bid}
%++++++++++++++
%We follow the same process as building win-rate model. 
Given the learned eCPM bid lower and upper bound, eCPM cost, we generate eCPM cost distributions over eCPM bid price. The general idea is to compute the \emph{average eCPM cost} for each available eCPM bid price.  First we estimate the probability density distributions  (\texttt{p.d.f} ) of eCPM upper bound and lower bound, as well as eCPM cost upper bound and lower bound, followed by the computation of the corresponding cumulative density distribution  (\texttt{c.d.f} ) of them. Compared to win-rate model, we need to  track \emph{extra eCPM cost upper and lower bound } in order for accurate estimation of the true cost. Finally, eCPM cost  is given by:
{
\begin{lemma}
eCPM cost at different bid price with $index = \frac{bid}{bin\_size}$ is given by: 
\begin{eqnarray}
eCPM\_cost(index) = \frac{ {c.d.f cost dn(index) - c.d.f cost up(index) } }  { {c.d.f ecpm dn(index) -  c.d.f  ecpm up(index)}} \nonumber
\end{eqnarray}
\end{lemma}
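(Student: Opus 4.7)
The plan is to interpret the claim as an averaging identity: at a given bid with index $k$, the formula should produce the mean of $\texttt{eCPM\_cost}[i]$ taken over exactly those observations $i$ whose bid range covers $k$. So I would begin by making the ``winning set'' explicit: define $\mathcal{A}(k) = \{\,i : \text{indmin}[i] \le k \le \text{indmax}[i]\,\}$, where $\text{indmin}[i]$ and $\text{indmax}[i]$ are the binned indices of $\texttt{eCPM\_dn}[i]$ and $\texttt{eCPM\_up}[i]$ from Algorithm~\ref{Alg:winrate}. By construction of the ranges in \S 4.1, $\mathcal{A}(k)$ is precisely the set of historical auctions that would have been won by a bid with index $k$.

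Next, I would re-derive the denominator from the previous lemma by a telescoping argument. Each observation $i$ contributes $+1$ to $\texttt{p.d.f}$ of $\texttt{ecpm dn}$ at index $\text{indmin}[i]$ and $+1$ to $\texttt{p.d.f}$ of $\texttt{ecpm up}$ at index $\text{indmax}[i]$; taking cumulative sums and subtracting, the difference $\texttt{c.d.f ecpm dn}(k) - \texttt{c.d.f ecpm up}(k)$ equals the number of ranges that have opened but not yet closed at $k$, which is $|\mathcal{A}(k)|$. This is exactly the unnormalized win count (the win-rate lemma divided further by $n$).

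Then I would apply the same telescoping to costs. Since $\texttt{p.d.f}$ of $\texttt{cost dn}$ places $\texttt{eCPM\_cost}[i]$ at $\text{indmin}[i]$ and $\texttt{p.d.f}$ of $\texttt{cost up}$ places $\texttt{eCPM\_cost}[i]$ at $\text{indmax}[i]$, their c.d.f. difference at $k$ telescopes to $\sum_{i \in \mathcal{A}(k)} \texttt{eCPM\_cost}[i]$, i.e., the total cost of currently-active observations. Dividing total active cost by the number of active observations yields the arithmetic mean eCPM cost at bid index $k$, which is the claimed formula.

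The main subtlety I expect is endpoint bookkeeping: one must be consistent about whether the inequality at $\text{indmax}[i]$ is strict or non-strict in both the cost and count accumulators, because any mismatch would leave a handful of observations in the numerator but not the denominator (or vice versa) and break the averaging interpretation. Once the intervals are defined identically in both telescopes, the identity follows immediately and no further estimation is required.
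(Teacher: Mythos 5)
Your proposal is correct and takes essentially the same route as the paper: the paper's own justification reads the c.d.f.\ differences as, respectively, the total cost and the count of observations whose $[eCPM\_dn, eCPM\_up]$ range covers the given bid index, so the ratio is the average eCPM cost over the auctions won at that bid (illustrated by Table~\ref{tbl:cost}, e.g.\ at index $4$ the formula gives $(0.043-0.008)/(3-1)=0.0175$, the mean cost of the two active observations). Your telescoping formalization and the endpoint caveat --- the algorithm's convention is $indmin[i]\le k < indmax[i]$, applied identically in the count and cost accumulators, so the averaging identity is exact --- simply make rigorous what the paper asserts informally.
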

}
Similar to win-rate,  \emph{only} ad campaigns with cost price falling between the lower and upper bound of eCPM bid can win the auction, and therefore the number of campaigns falling into this range is divided by the total number of observed campaigns whose ecpm bid prices fall between ecpm bid lower and upper bound. %Since any ads campaign in the logs will contribute for this computation equally, and therefore there is no censor problem.
To summarize, Alg.~\ref{alg:cost} presents the algorithm.  %For example, we feed the win-rate model learning algorithm using the input with format of an array of \{eCPM up, eCPM down, eCPM cost\}: 
As an illustrative example, with three observations shown in \S 4.2, the eCPM cost  training process is presented in Table~\ref{tbl:cost} and the learned cost model is plotted in the bottom-right figure of Fig.\ref{fig:cost}.

%\texttt{Observation 1: \{0.04, 0.01, 0.008\}  },\texttt{Observation 2:  \{0.05, 0.02, 0.015\} },\texttt{Observation 3: \{0.05, 0.03, 0.02\}},totally $n=3$ observations with bin\_size=0.01.  

%%%==================
%++++++++++++++++++++
{
\SetKwInOut{Parameter}{Parameters}
%++++++++++++++
%++++++++++++++
%\fbox{\parbox{0.96\linewidth}{
%\begin{algorithm}
%\captionof{algorithm}{My algorithm}
 \KwIn{Array of \{eCPM up, eCPM down, eCPM cost\}}
 \Parameter{n: the number of observations in array;  bin\_size: the interval of eCPM bid, \\
 say=0.01; max: the largest index of bid given the bin\_size}
 \KwOut{eCPM cost array at each observation $i$} 
 %\begin{algorithmic}
\begin{algorithm}
\For{$i\ge 0$ \KwTo $n$ }{
indmin = ecpm down[i]/bin\_size; \\
 indmax = ecpm up[i]/bin\_size;  \\
cost = ecpm cost[i]; \\
  \If{$indmax > max $}
   { $max = indmax$
 }
   \If{$indmin > 0 \;\; and\;\; indmax>0$}
   { $ \texttt{p.d.f}\;\; dn[indmin] ++ , \;\;  \texttt{p.d.f} \;\;up [indmax] ++ $ \; 
    $\texttt{p.d.f}\;\; cost\;\; dn[indmin] +=cost $ \;
    $\texttt{p.d.f}\;\;cost \;\;up [indmax] +=cost $\;
    }
    $i := i + 1$
 }
\For{$i\ge 0$ \KwTo $max$ }{ 
    $ \texttt{c.d.f} \;\;dn[i] +=\texttt{p.d.f.} \;\; dn [i]$\;
    $  \texttt{c.d.f} \;\; up [i] += \texttt{p.d.f} \;\;up [i]$ \;
     $\texttt{c.d.f}\;\; cost dn[i] +=\texttt{ p.d.f} \;\;cost\;\; dn [i]$ \;
     $\texttt{c.d.f} \;\; cost up [i] += \texttt{p.d.f}\;\; cost\;\; up [i]$ \;
     $ecpm cost[i] :=  ( \texttt{c.d.f}\;\; cost\;\; dn [i] - \texttt{c.d.f} \;\; cost \;\;up [i])/  (\texttt{ c.d.f.}\;\;  dn [i] -\texttt{ c.d.f.}\;\; up [i])$ 
     $j := j + bin\_size$
    }
    \Return eCPM cost array
\caption{eCPM cost distribution learning algorithm}
\label{alg:cost}
\end{algorithm}
%\end{algorithmic}
%}
%}
%}
%%%==================
%++++++++++++++++++++

%%%%%====

\section{Performance evaluation}
%In this section, we present performance evaluation for the proposed CPA goal optimization framework. 

{\bf Advertising campaign dataset} We collect advertising campaign log data on a major advertising platform in
U.S after removing inactive campaigns. We take a sample of 15\% of all bidding records in one-week, totally 135,409 native ad campaign 
profiles in offline evaluation.   All these campaigns focus on clicks for displaying advertisement that well matches the form and function of the publisher platform upon which it appears.
The campaigns use cost-per-click (CPC) bidding,  and the advertiser needs to pay only when someone actually clicks on the ad after visiting publisher site. In  auctions the advertising campaigns win the participated inventory based on GSP.  
The advertising campaign conversions are associated with conversion rules, which lead to 
user-action such as as 
purchase, sign-up, registration, or view a key page. Since different
conversion actions have uneven difficulty, advertisers usually set
the customized conversion rules in a campaign to track
the conversion actions they care.  In the dataset there are totally 277
conversion rules, and 244 targeting definitions which consist of
53 geographical regions (states and areas) on seven  device platforms. 
 We show the distribution of bid price,
 CTR of advertisers, and top 10 conversion rules in Fig.\ref{fig:insight}.

\subsection{Offline evaluation}

%Experiment validation of the proposed approach requires more details for reader. It is unclear how NN search baseline is very different performance compared to simulation on historical logs. Author fails to argue effectiveness of simulation based on optimization method compared to other pertinent approaches from literature. It is also hard to assess scalability of the proposed approach and interaction between multiple campaign participating in same auction.

\subsubsection{CPA goal optimization performance}
Given the fact that we only know the ground-truth for the current bid, {\it i.e.,} the true number of conversions and the true spend for current campaign at current bid price,   we therefore design experiments to compare the differences between the true CPA (denoted as CPA) and the forecasted CPA (denoted as $\hat{CPA}$) at current bid.  More formally, we use the mean absolute percentage error (MAPE) and root mean square percentage error (RMSPE)  as the measurement, {\it i.e.,}  
{
\begin{eqnarray}
\text{MAPE} =  \sum_{i=1}^n \frac { |{y}_i-  \hat{y_i}| }{y_i},    \;\;\; 
%\nonumber \\
\text{RMSPE} = \sqrt{\frac{1}{n}\sum_{i=1}^n{  ( \frac{\hat{y_i}-y_i}{y_i}})^2},
\label{EQ:ape}
\end{eqnarray}
}
where $y_i$ is the CPA for ads campaign $i$ with current bid $b_i$, and $\hat{y}_i$ is the forecasted CPA for campaign $i$ at current bid $b_i$.
The smaller values of MAPE, RMSPE indicate better performance.  We also compare the following baseline methods. 

$\bullet$ {\bf Nearest Neighbor Search (NNS).} % If the observed bid price exists in history, then we retrieve the corresponding CPA computed using spend and conversion in history. %Otherwise, we use the nearest neighbor method to predict the CPA. 
For the observed bid $b$,  if $b'_i$ is the nearest neighbor of bid $b_i$ in history, then we retrieve the corresponding CPA for bid $b'_i$ as the predicted CPA.  
 
$\bullet$  {\bf Linear interpolation (LI)} %The idea is to use linear interpolation in historical bid price to predict CPA correspondingly. For example, 
For the observed true bid $b$,  if $b_i \leq b \leq b_j$, where $b_i$ and $b_j$ are the nearest bid prices in history observations.
Then the predicted $\hat{CPA}$ at bid $b$ is the linear interpolation %\footnote{$
\begin{eqnarray}
\hat{CPA}= C_i + (b- b_i)\frac{C_j - C_i}{b_j - b_i}, 
\end{eqnarray}
where $C_i$ is the observed CPA at bid $b_i$ and $C_j$ is the observed CPA at bid $b_j$.
based on $b_i$ and $b_j$.  % {\it i.e.,}

{
\begin{figure*}
	\centering
	\begin{subfigure}[b]{0.30\textwidth}
	\includegraphics[height=1.4in, width=\linewidth]{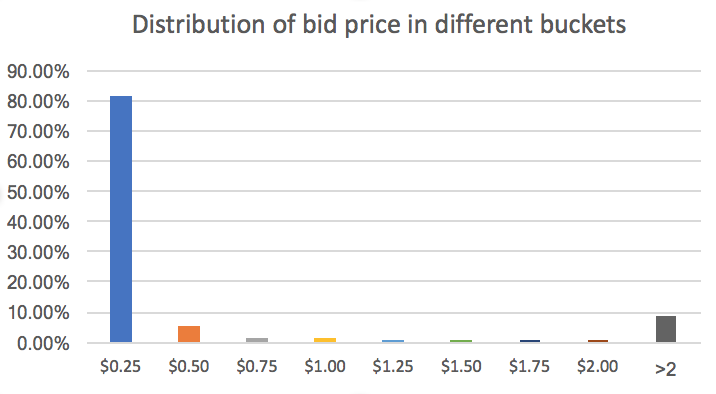}
	\caption{Bid distribution}
		\label{fig:bid}
	\end{subfigure}
	\begin{subfigure}[b]{0.30\textwidth}
	\centering
	\includegraphics[height=1.4in, width=\linewidth]{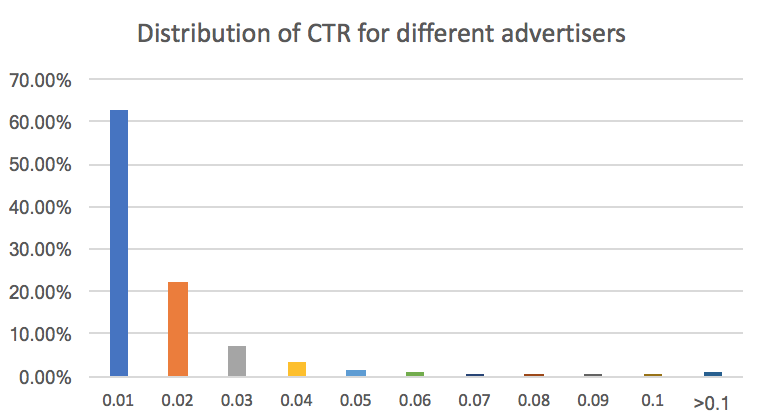}
	%\caption{Distribution of APE of forecasted clicks. x-axis: APE bucket, y-axis: Percentage of ad groups}
		\caption{CTR distribution}
	\label{fig:ctr}
	\end{subfigure}
	\begin{subfigure}[b]{0.30\textwidth}		
	\centering
	\includegraphics[height=1.4in, width=\linewidth]{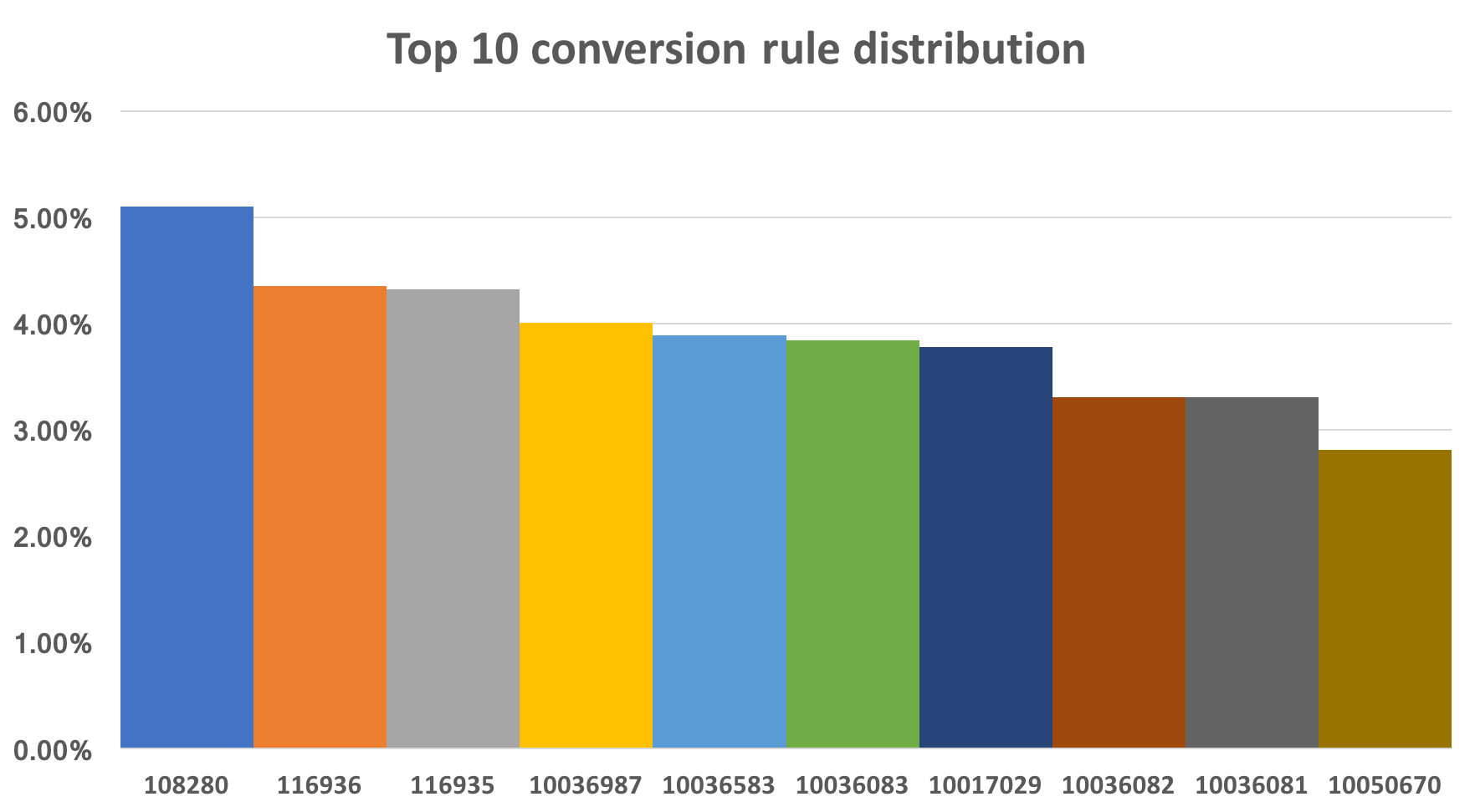}
	%\caption{Distribution of APE of forecasted spend. x-axis: APE bucket, y-axis: Percentage of ad groups.}
	\caption{Conversion rule (top 10) distribution}
	\label{fig:supply}
	\end{subfigure}		
	\caption{\small (a) distribution of bid price; (b) distribution of CTR for advertisers; (c) distribution of top 10 conversion rules. }
	\label{fig:insight}
\end{figure*}
}

$\bullet$ {\bf Gradient Boost Decision Tree method (GBDT)} The idea is to build a regression model for CPA prediction given the features including the bid price, advertiser ID and other advertiser features (such as supply group, advertiser category, etc)~\cite{DBLP:conf/kdd/CuiZLM11}.  When an advertiser bids in an auction, GBDT tells what is the forecasted CPA. GBDT~\cite{Mason00boostingalgorithms} model essentially is in the form of an ensemble of weak prediction models such as decision trees.  In experiment, we set tree number=25, number of leaf node=40, learning rate=0.75 and sampling rate=0.75 as a regression task for CPA prediction.

%%%%=========
{
\begin{table}
%\small
\caption{\small Performance comparisons of MAPE, RMSPE between CPA and predicted $\hat{CPA}$ using (1) proposed method; (2) baseline 1: nearest neighbor search (NNS); (3) baseline 2: linear interpolation (LI); (4)  baseline 3: GBDT}
\label{tbl:CPA_ape}
%\vskip 0.15in
\begin{center}
%\begin{small}
\begin{tabular}{ c|c|c}
\hline
\hline
Method & MAPE & RMSPE \\
\hline
%DecrInRecoBid & 5229  & 1.50\% \\
Our method & 10.28\%  & 15.68\% \\
Nearest Neighbor Search (NNS)  &  23.86\% &  50.23\% \\
Linear Interpolation (LI)  &  18.65\%  & 32.65\% \\
GBDT   &  20.07\%  & 31.09\% \\
\hline
\hline
\end{tabular}
\end{center}
\end{table}
}
%%%%=========
\iffalse
{
\begin{table}
\small
\caption{\small Performance comparisons of MAPE, RMSPE between CPA and predicted $\hat{CPA}$ using (1) proposed method; (2) baseline 1: nearest neighbor search (NNS); (3) baseline 2: linear interpolation (LI); (4)  baseline 3: GBDT}
\label{tbl:CPA_ape}
%\vskip 0.15in
\begin{center}
%\begin{small}
\begin{tabular}{ c|c|c}
\hline
\hline
Method & MAPE & RMSPE \\
\hline
%DecrInRecoBid & 5229  & 1.50\% \\
Our method & 10.28\%  & 15.68\% \\
Nearest Neighbor Search (NNS)  &  23.86\% &  50.23\% \\
Linear Interpolation (LI)  &  18.65\%  & 32.65\% \\
GBDT   &  20.07\%  & 31.09\% \\
\hline
\hline
\end{tabular}
\end{center}
\end{table}
}
\fi
%%%%=========
%%%%=========

{\bf Insight} Table. ~\ref{tbl:CPA_ape} shows the performance evaluation on the given ad campaign dataset from a major web portal company.  Given that it is a challenging task to predict CPA for diversified, fluctuated ad campaigns, the result obtained using proposed model is competitive and reasonable, because it leverages all history information completely by smartly setting the bid price. The complete history information makes the model more robust by smoothing the outliers and noises in observations. Moreover, the model provides the principled way to achieve the optimization goal. 

One important observation is the forecast result from GBDT is not accurate, due to the fact that the available advertiser-level features are far smaller than user-level features. More importantly, the non-decreasing non-linear relations between CPA and bid price is a bit hard to capture by using the tree structure in the traditional classification tasks.  NNS approach cannot model the CPA well due to the simple assumption of relations between CPA and bid price. Moreover, our algorithm is linear with the number of campaigns in participation in the same auction, therefore it can scale very well with large scale ad campaigns running in internet companies. % easily s
Fig.\ref{fig:cpa-click-spend} shows several examples of ad campaigns with forecast CPA, total money of spend and number of clicks at different bid prices. 

%scalability of the proposed approach and interaction between multiple campaign participating in same auction.
%One can easily obtain the desired CPA goal by setting the corresponding bid price given the budget limit.  

% connections between 
%we define the absolute pe %how accurate the proposed CPA optimization method.  In our 
%Median of APE and RMSE of forecasted click and spend on the current bid price with ~120k ads campaigns in one-day.  

%==========
%==========
\begin{figure*}
	\centering
	\begin{subfigure}[b] {0.33\textwidth}
	\includegraphics[height=2in, width=\linewidth]{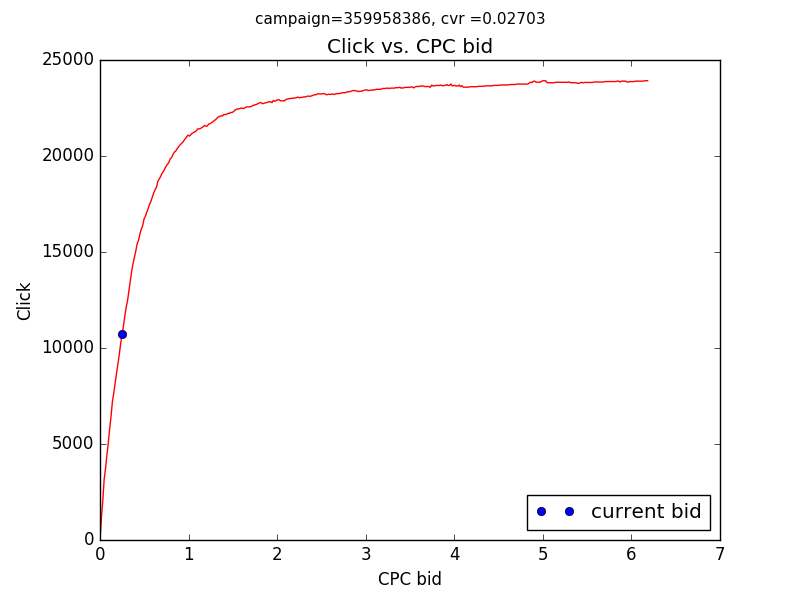}
	\caption{Ads 1: click vs. bid}
		\label{fig:click-1}
	\end{subfigure}
	\begin{subfigure}[b]{0.33\textwidth}
	\centering
	\includegraphics[height=2in, width=\linewidth]{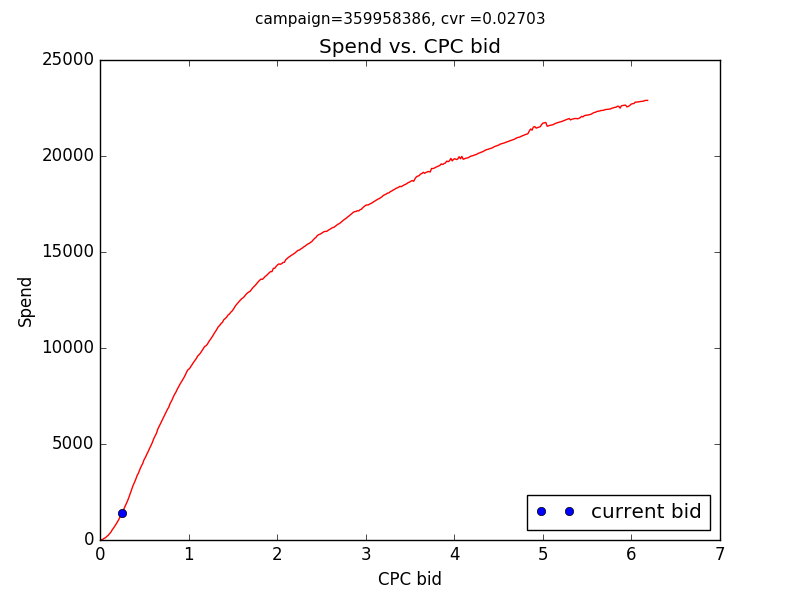}
	%\caption{Distribution of APE of forecasted clicks. x-axis: APE bucket, y-axis: Percentage of ad groups}
		\caption{Ads 1: spend vs. bid}
	\label{fig:spend-1}
	\end{subfigure}
	\begin{subfigure}[b]{0.33\textwidth}		
	\centering
	\includegraphics[height=2in, width=\linewidth]{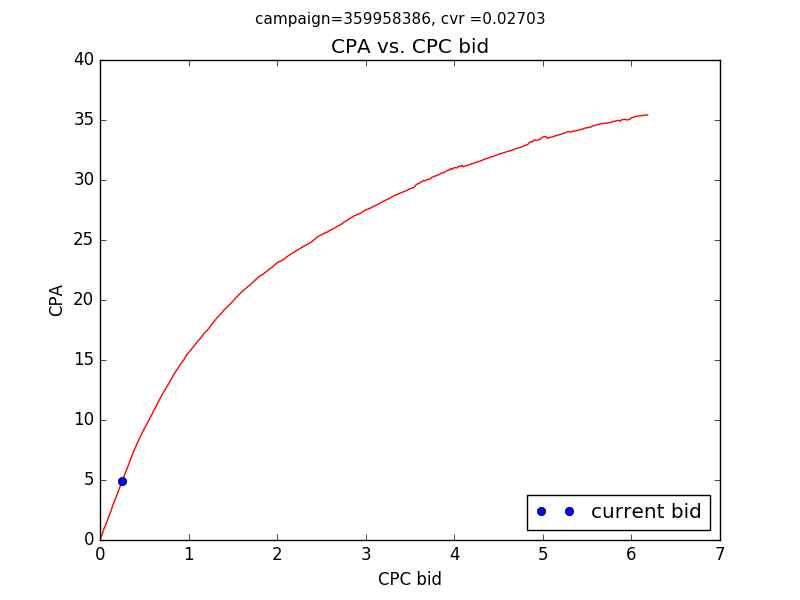}
	%\caption{Distribution of APE of forecasted spend. x-axis: APE bucket, y-axis: Percentage of ad groups.}
	\caption{Ads 1: cpa vs. bid}
	\label{fig:cpa-1}
	\end{subfigure}	
	\begin{subfigure}[b]{0.33\textwidth}
	\includegraphics[height=2in, width=\linewidth]{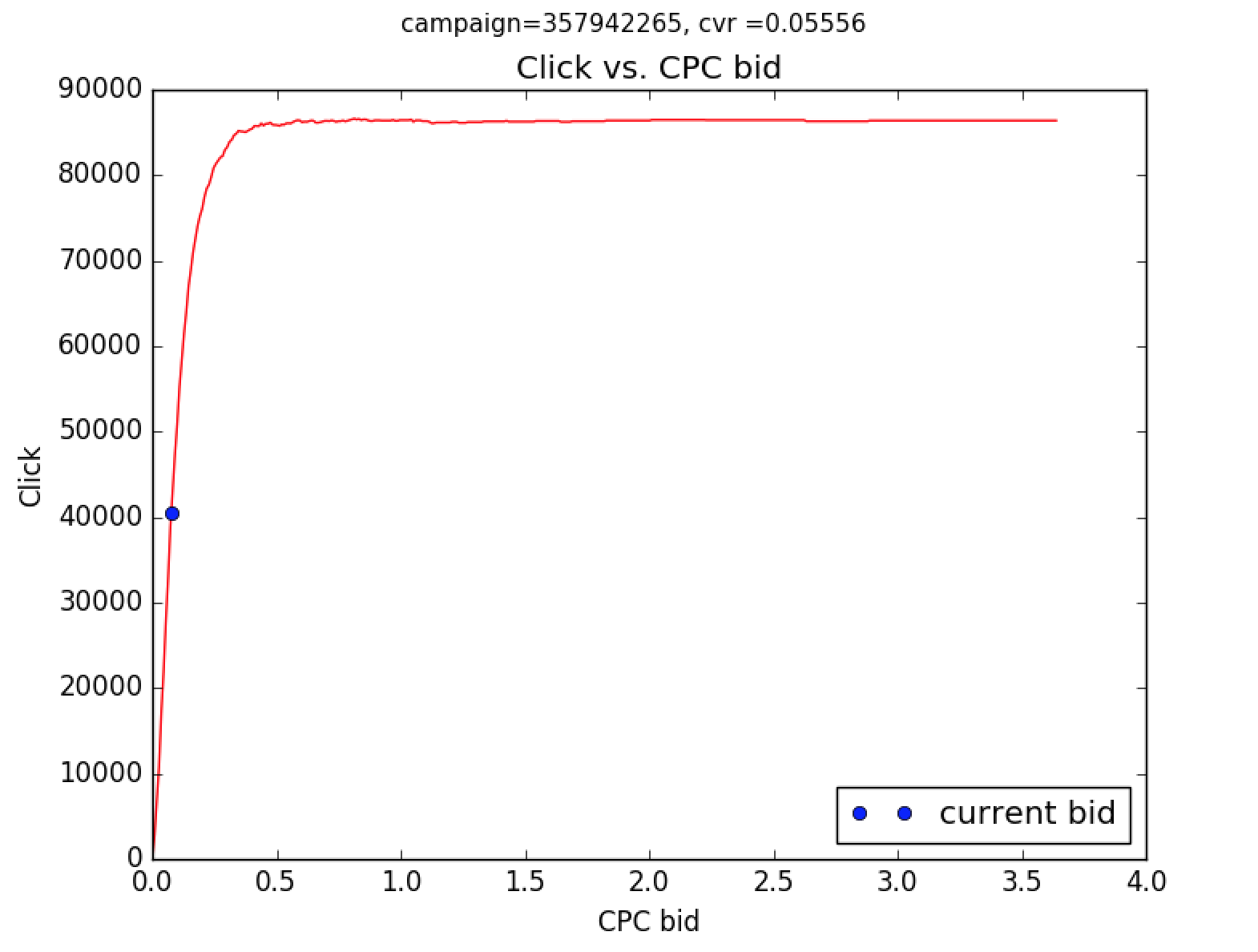}
	\caption{Ads 2: click vs. bid}
		\label{fig:click-2}
	\end{subfigure}
	\begin{subfigure}[b]{0.33\textwidth}
	\centering
	\includegraphics[height=2in, width=\linewidth]{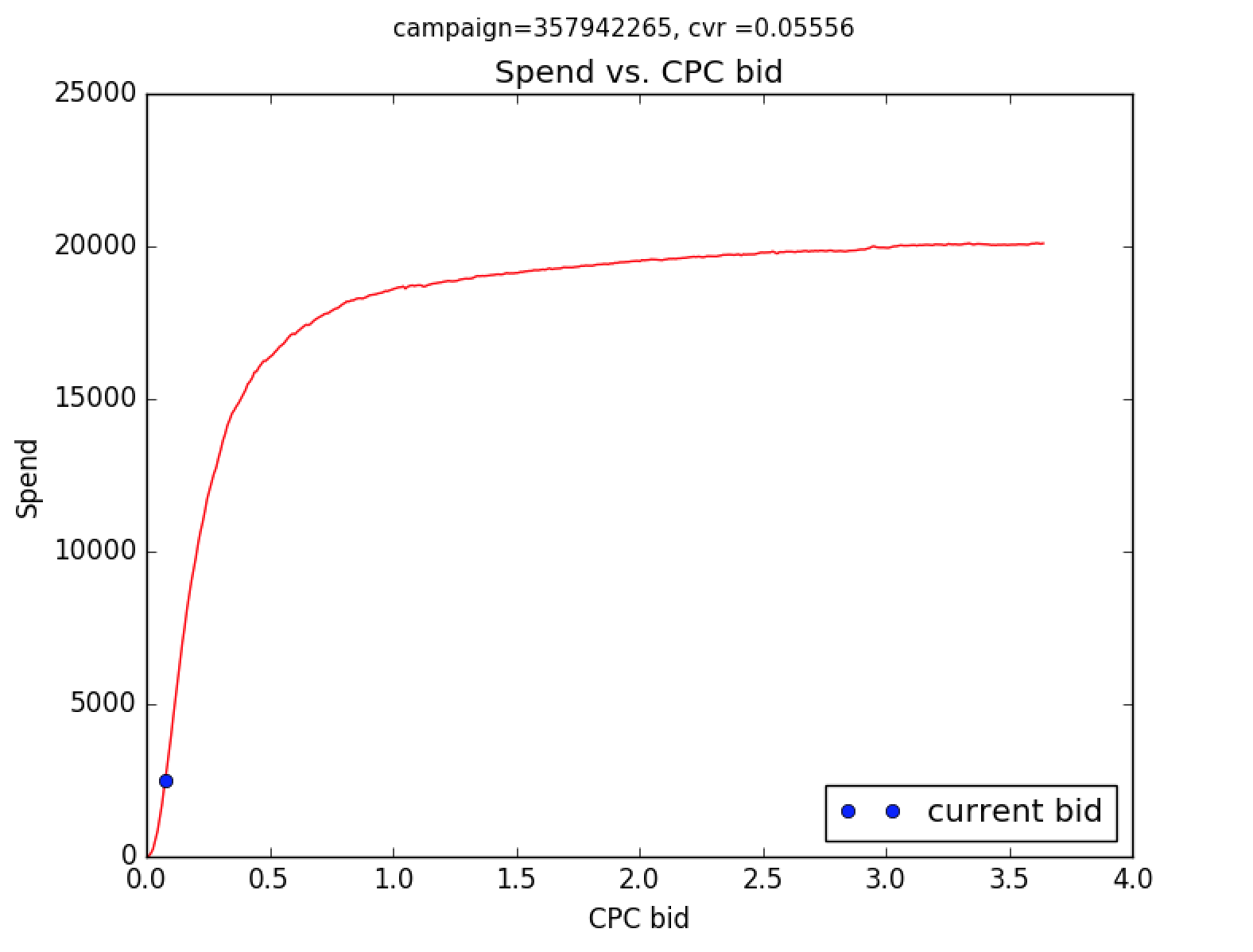}
	%\caption{Distribution of APE of forecasted clicks. x-axis: APE bucket, y-axis: Percentage of ad groups}
		\caption{Ads 2: spend vs. bid}
	\label{fig:spend-2}
	\end{subfigure}
	\begin{subfigure}[b]{0.33\textwidth}		
	\centering
	\includegraphics[height=2in, width=\linewidth]{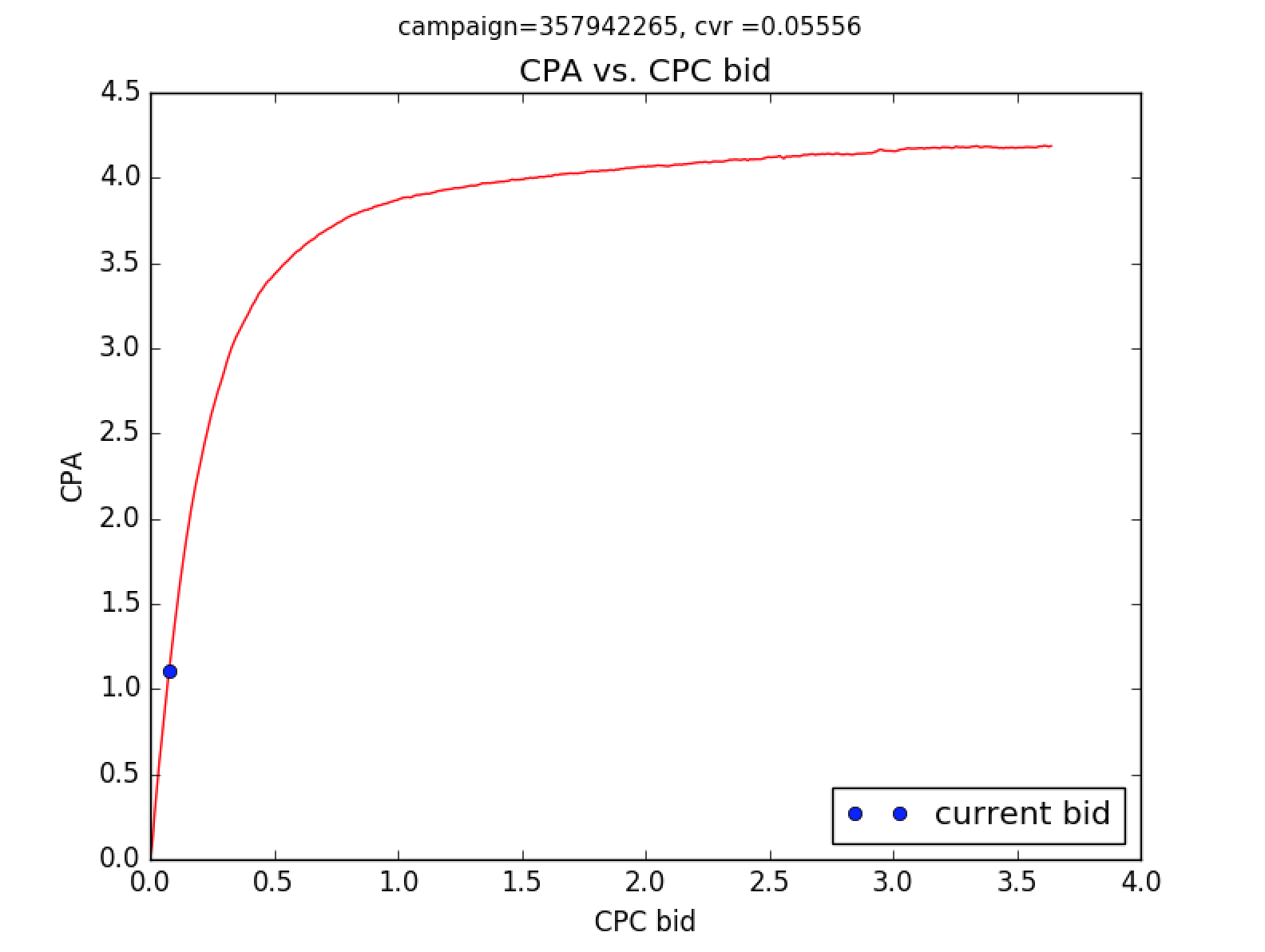}
	%\caption{Distribution of APE of forecasted spend. x-axis: APE bucket, y-axis: Percentage of ad groups.}
	\caption{Ads 2: cpa vs. bid}
	\label{fig:cpa-2}
	\end{subfigure}	
%	\iffalse
		\begin{subfigure}[b]{0.33\textwidth}
	\includegraphics[height=2in, width=\linewidth]{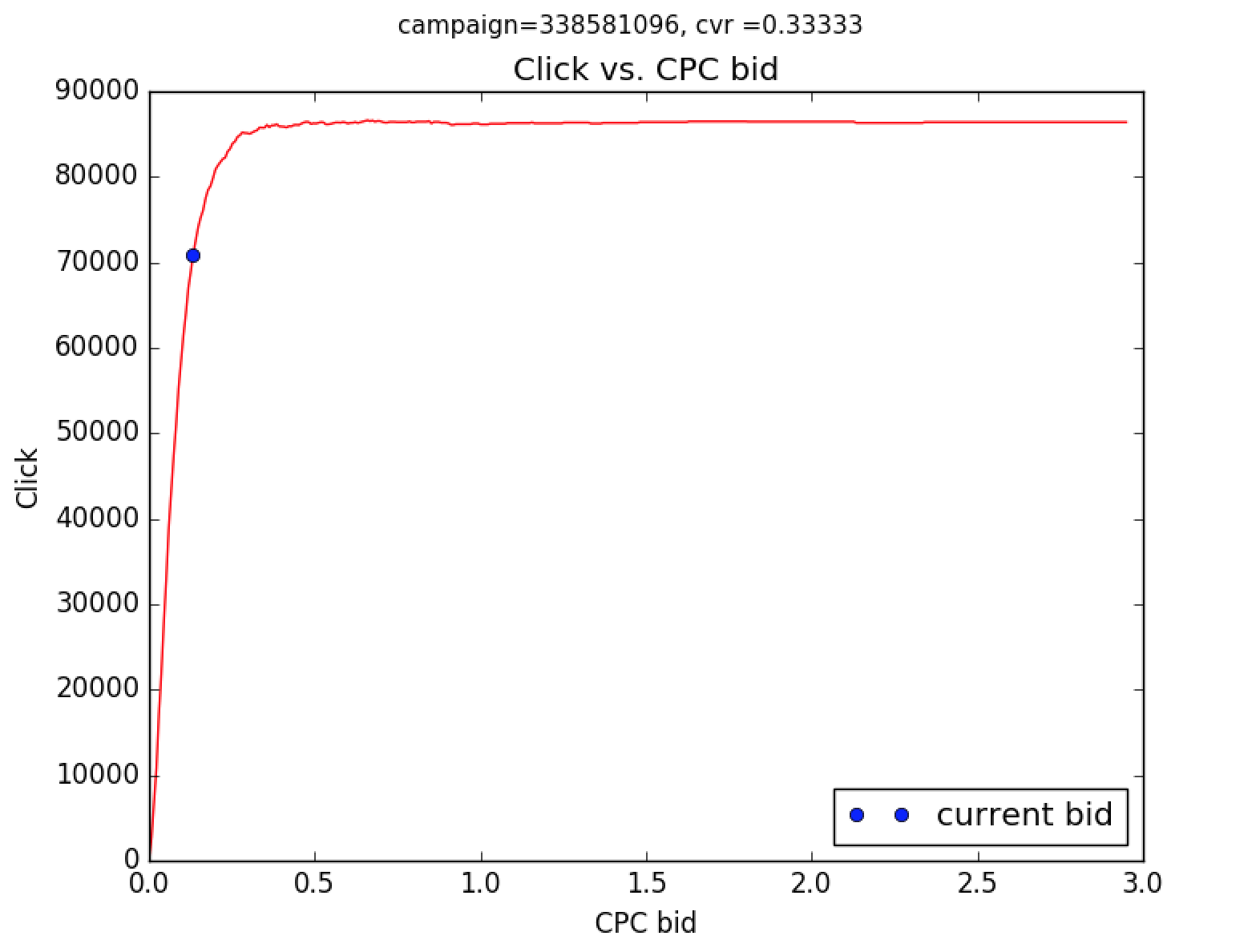}
	\caption{Ads 3: click vs. bid}
		\label{fig:click-3}
	\end{subfigure}
	\begin{subfigure}[b]{0.33\textwidth}
	\centering
	\includegraphics[height=2in, width=\linewidth]{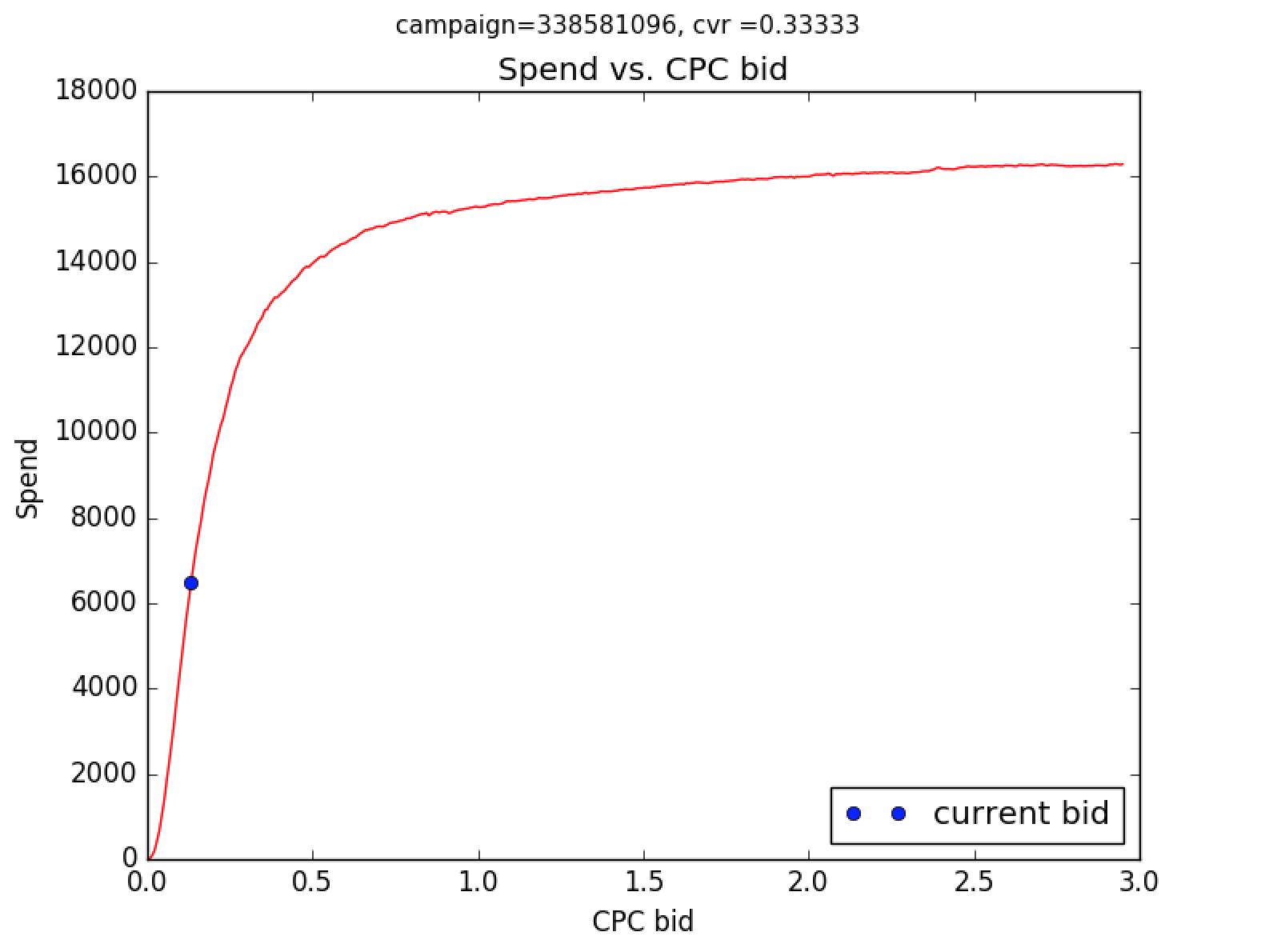}
	%\caption{Distribution of APE of forecasted clicks. x-axis: APE bucket, y-axis: Percentage of ad groups}
		\caption{Ads 3: spend vs. bid}
	\label{fig:spend-2}
	\end{subfigure}
	\begin{subfigure}[b]{0.33\textwidth}		
	\centering
	\includegraphics[height=2in, width=\linewidth]{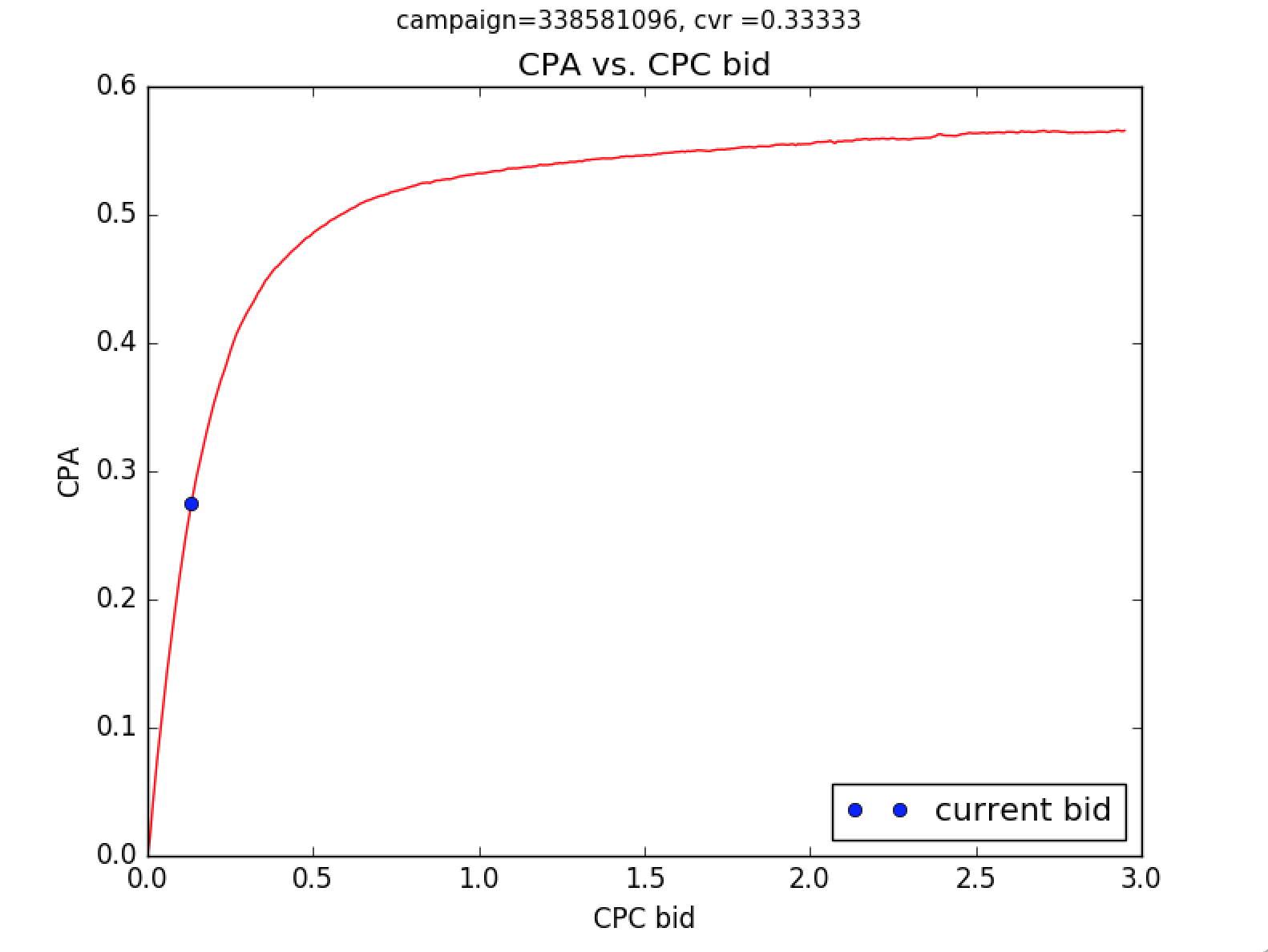}
	%\caption{Distribution of APE of forecasted spend. x-axis: APE bucket, y-axis: Percentage of ad groups.}
	\caption{Ads 3: cpa vs. bid}
	\label{fig:cpa-3}
	\end{subfigure}	
%	\fi
	\caption{\small {CPA vs. bid,  Click vs. bid, Spend vs. bid curves for several different ads campaigns. CVR for each ads campaigns: 
	ads 1: $0.0270$, ads 2: $0.0556$, ads 3: $0.333$.
	}}
	\label{fig:cpa-click-spend}
\end{figure*}

\subsubsection{Bid landscape model performance}
%As illustrated in the paper, bid landscape model is an important component that determines the system performance. Therefore, we also present the performance comparisons of (a) win rate model; (b) eCPM cost model. Notice that win rate and eCPM cost distributions depend on bid prices. 
We therefore first obtain the ground-truth of them:  $
%\begin{eqnarray}
{ win\_rate(bid) = \frac{click(bid)}{Impression \times CTR},}
\nonumber \\
{ eCPM\_cost(bid) = \frac{ spend(bid)}{click(bid)} \times CTR \times 1000,}
$
%\end{eqnarray}
where $bid$ is set to the current bid, $click(bid), spend(bid), Impression,  CTR$ are obtained from the true observations of ads campaign. The forecast win-rate (denoted as $\hat{winrate}$) and eCPM (denoted as $\hat{eCPM}$) distributions are computed using the proposed algorithm. In evaluation, we evaluate their performance on the current bid price, and compare the difference between the ground-truth (e.g., $eCPM$)
and the corresponding forecasted values (e.g., $\hat{eCPM}$).  Similar to CPA goal evaluation, we use the mean absolute percentage error (MAPE) and root mean square percentage error (RMSPE) of Eq.(\ref{EQ:ape}) as the measurements.  The smaller the better. % values of these measurement indicate better performance. 
For win-rate model, we compare the following baseline methods:

%%=========
{
\begin{table}
\caption{\small Performance comparisons of MAPE, RMSPE between true winrate and predicted $\hat{winrate}$ using (1) proposed method; (2) flat-rate (=0.1); (3)  flat-rate (=0.2); (4) flat-rate (=0.3); 
(4)  survival model; (5) log-normal distribution.}
\label{tbl:winrate_ape}
%\vskip 0.15in
\begin{center}
%\begin{small}
\begin{tabular}{ p{0.8cm}|p{0.85cm}|p{1.05cm}|p{1.05cm}|p{1.05cm}|p{0.8cm}|p{0.8cm}}
\hline
\hline
Method & Our method & flat rate (=0.1)  & flat rate (=0.2) & flat rate (=0.3) & survival model &  log-normal \\
\hline
MAPE & 20.07\% & 40.19\% & 33.49\% & 35.27\% & 31.10\% & 28.73\% \\
 RMSPE & 28.75\% & 61.70\% & 50.48\% & 47.66\% & 41.50\% & 35.48\% \\
\hline
\hline
\end{tabular}
\end{center}
\end{table}
}
%%%=======
\iffalse
{
\begin{table}
\small
\caption{\small Performance comparisons of MAPE, RMSPE between true winrate and predicted $\hat{winrate}$ using (1) proposed method; (2) flat-rate (=0.1); (3)  flat-rate (=0.2); (4) flat-rate (=0.3); 
(4)  survival model; (5) log-normal distribution.}
\label{tbl:winrate_ape}
%\vskip 0.15in
\begin{center}
%\begin{small}
\begin{tabular}{ c|c|c}
\hline
\hline
Method & MAPE & RMSPE \\
\hline
%DecrInRecoBid & 5229  & 1.50\% \\
Our method & 20.07\%  & 28.75\% \\
flat rate(=0.1)  &  40.19\% &  61.70\% \\
flat rate(=0.2)  &  33.49\%  & 50.48\% \\
flat rate(=0.3) &  35.27\%  & 47.66\% \\
survival model  &  31.10\%  & 41.50\% \\
log-normal  &  28.73\%  & 35.48\% \\
\hline
\hline
\end{tabular}
\end{center}
\end{table}
}
\fi
%%=========
{
\begin{table}
\caption{\small Performance comparisons of MAPE, RMSPE between true eCPM cost and predicted $\hat{eCPM}$ using (1) proposed method; (2) flat-rate (=0.8); (3)  flat-rate (=0.9); 
(4)  survival model; (5) log-normal distribution.}
\label{tbl:ecpm_ape}
%\vskip 0.15in
\begin{center}
%\begin{small}
\begin{tabular}{ p{0.85cm}|p{0.85cm}|p{1.1cm}|p{1.1cm}|p{0.85cm}|p{0.8cm} }
\hline
\hline
%Method & MAPE & RMSPE \\
Method & Our method & flat rate (=0.8) & flat rate (=0.9) & survival model & log-normal  \\
\hline
%DecrInRecoBid & 5229  & 1.50\% \\
MAPE & 13.75\% &  21.43\% & 19.89\% & 25.43\% & 22.83\% \\
RMSPE & 18.41\%  & 33.57\% & 29.84\% & 35.18\% & 32.96\% \\
%Our method & 13.75\%  & 18.41\% \\
%flat rate(=0.8)  &  21.43\% &  33.57\% \\
%flat rate(=0.9)  &  19.89\%  & 29.84\% \\
%survival model  &  25.43\%  & 35.18\% \\
%log-normal  &  22.83\%  & 32.96\% \\
\hline
\hline
\end{tabular}
\end{center}
\end{table}
}
%%%=======
\iffalse
{\small
\begin{table}
\small
\caption{\small Performance comparisons of MAPE, RMSPE between true eCPM cost and predicted $\hat{eCPM}$ using (1) proposed method; (2) flat-rate (=0.8); (3)  flat-rate (=0.9); 
(4)  survival model; (5) log-normal distribution.}
\label{tbl:ecpm_ape}
%\vskip 0.15in
\begin{center}
%\begin{small}
\begin{tabular}{ c|c|c}
\hline
\hline
Method & MAPE & RMSPE \\
\hline
%DecrInRecoBid & 5229  & 1.50\% \\
Our method & 13.75\%  & 18.41\% \\
flat rate(=0.8)  &  21.43\% &  33.57\% \\
flat rate(=0.9)  &  19.89\%  & 29.84\% \\
survival model  &  25.43\%  & 35.18\% \\
log-normal  &  22.83\%  & 32.96\% \\
\hline
\hline
\end{tabular}
\end{center}
\end{table}
}
\fi%%=========
%%%=======

$\bullet$ {\bf Survival model} It models winning price~\cite{DBLP:conf/kdd/ZhangZWX16} during the investigation period from low to high as the patient's underlying survival period from low to high days.  For example, if bid $b$ wins the auction with the observed winning price $z$, then it is analogous to the observation of the patient's death on day $z$. On the other hand, if the bid $b$ loses the auction, and then this is analogous to the patient's left from the investigation on day $b$.   In more detail, 
the winning probability $b_j$ is the bid price at auction $j$, $d_j$ denotes the number of auction winning cases with winning price $b_j-1$, and $n_j$ is the number of auction losing cases with bid price
$b_j -1$ at price $b_x$, which is equal to
%{\small
\begin{eqnarray}
win\_rate(b_x) = 1 - \prod_{b_j < b_x} \frac{n_j - d_j}{n_j}.
\label{EQ:survival}
\end{eqnarray}

\newcommand{\E}{\mathbb{E}}
\newcommand{\Var}{\mathrm{Var}}

$\bullet$ {\bf Flat win-rate} Empirically, we set win-rate the same for all different bid prices. In particular, we list the results with win-rate in set \{0.10, 0.20, 0.30\}.

$\bullet$ {\bf Log-normal estimation} We use log-normal distribution~\cite{DBLP:conf/kdd/CuiZLM11} to model the winning price distribution\footnote{
$\mu, \sigma$ are sample mean and variance, and the winning rate at bid $b$  is given by $winrate(b) = \int^b_{\infty} Pr(z) dz $.} $Pr(z) \sim \text{logNorma}l(\mu, \sigma)$.
The baselines for eCPM\_cost evaluation are listed as follows. 

$\bullet$  {\bf Survival model} The idea is to obtain the winning price as the true cost as introduced in~\cite{DBLP:conf/kdd/ZhangZWX16}  based on non-parametric estimation using  the observed impressions and the losing bid requests.

$\bullet$  {\bf Flat cost} Intuitively, we set the CPC cost is equal  to 80\% and 90\% of CPC bid, and then compute the corresponding eCPM\_cost.  This is based on the simple observation: most campaigns win the auctions with price slightly slower than the bidding price. 

$\bullet$ {\bf Log-normal estimation} The idea is to get all eCPM\_cost distributions at different bid price $bid$, and then use log-normal distribution~\cite{DBLP:conf/kdd/CuiZLM11} to model each eCPM\_cost distribution with sample mean $\E(cost)$ and variance $Var(cost)$.  Finally return the $\E(cost)$ as the predicted cost for the given bid price.

From Tables~\ref{tbl:winrate_ape}, ~\ref{tbl:ecpm_ape}, we observe that our method outperforms the other baselines for both winrate and eCPM cost evaluation.  For the win-rate model, survival model does not perform as well as we expect since the auctions actually do not perform the same way as the death of the patient, and the simple analogy to survival process is not accurate. Moreover, there are many losing campaigns without any log information, which further biases the estimation results. 

Log-normal model performs reasonable well because it captures the ``average'' behaviors of auctions, and the result looks reasonable for many ad campaigns although there are deviations from aggregation-level mean. Our approach actually, provides more flexibility to model the \texttt{c.d.f} of cost and win-rate distributions without any prior distribution assumption\footnote{Actually, we address data sparsity problem by considering all the available positions in auctions, and address the uncertainty issue using density estimation as is  validated in experiment without any counterfactuals.}. It is surprising the even flat-rate with ratio=0.9 eCPM\_cost model performs reasonable well. The reason is that cost, empirically, is generally slightly slower than the bid, and therefore 90\% of bidding price is a good candidate for ``robust"  estimation.

\subsection{Online A/B test}

%The authors need to describe the online A/B test in detail, e.g. how many data in the A/B test.
%In order to see how the performance is improved for real-world ad campaigns, 
We apply A/B test to the ad traffic using two scenarios (1) without any bid optimization, 5\% traffic with the default behavior; (2) apply CPA goal optimization, 5\% traffic with the new changes in live experiment. Notice that different categories of advertisers may have quite different CPA goals, and therefore instead of setting a uniform CPA goal, we set CPA goal corresponding to the bid that uses up the budget.  %This give the relatively fair comparisons for different campaigns. 
In particular, we look at the \emph{bid increase ratio},  \emph{click increase ratio} and \emph{ROI increase ratio}. 
%\begin{itemize}
Bid increase ratio (BIR) measures the percentage of bid increase from current bid ($bid^i_{cur}$ for ad $i$) to recommended bid ($bid^i_{cpa}$ for ad $i$) using proposed CPA model,  {\it i.e.,}
\begin{eqnarray}
BIR = \sum_{i} w_i \frac{bid^i_{cpa} - bid^i_{cur}}{bid^i_{cur}},
\label{EQ:BIR}
\end{eqnarray}
where $w_i$ is the normalized weight (i.e., $\sum_i w_i = 1$) for campaign $i$ which is proportional to the campaign spend. 
Click increase ratio (CIR) and ROI increase ratio (RIR) measure the percentage of click and ROI increase from current bid to recommended bid as defined in BIR,  respectively. %Eq.(\ref{EQ:BIR}),
 %and ROI increase ratio (RIR) measures the percentage of ROI increase from current bid to recommended bid, 
%The results are shown in Table~\ref{table:ABtest}.  
The results are shown below in Table~\ref{table:ABtest} over all slices (e.g., desktop, android, IOS) and campaigns (including budgeted and non-budgeted campaigns). We observed the clear performance gains using the proposed bid optimization strategy over baselines. 

Even if every campaign applies the bid optimization based on suggestion, then everyone will adjust their bids based on other's bids. As a result, the winning rate distribution estimated by historical data is likely to change. Fortunately, the bid landscape model is also daily updated by incorporating the changes of advertisers' bidding behaviors by giving enough emphasize on the recent behaviors. The model is gradually updated to accommodate the new changes. % (also in Fig.\ref{fig:bid_dist}): 

{
\begin{table}
\caption{Online Test Result: performance liftup over baseline.}
\begin{tabular}{ |c|c|c|c|}
\hline
Measurement & BIR& CIR & RIR \\
\hline
%DecrInRecoBid & 5229  & 1.50\% \\
Performance liftup &+12.81\% & +25.76\% & +25.37\%\\
\hline
%\hline
\end{tabular}
\label{table:ABtest}
\end{table}
}

\section{Conclusion}
\label{sec:conclude}

This paper presents a novel way for bid optimization that balances CPA goals, conversions and spend. The experiment numbers demonstrate our method is effective in getting more conversions with improved ROI for the same money and better in utilizing the budget. %In practice, advertising campaigns may focus on different objective types (e.g., clicks, page-views, conversions, user' experience or a combination of them), which recalls different bidding strategies. % in particular context.  
We would like to optimize campaign objectives (e.g., post-app install, post-app event, post-app value, post-app user engagement) that advertisers really care about in practice.
%depending on campaign objectives.

%to achieve the desired CPA goals,
%The experiment numbers demonstrate our method is effective in getting more conversions with improved ROI for the same money and better in utilizing the budget. 
%This work offers a bidding strategy that balances CPA goals, conversions and spend.
%which we believe is an important piece that powers the bid landscape for bid recommendation and forecasting.  The approach is interesting in providing the optimal bid solution based on targeting CPA goals using history performance data. %We also provide effective optimization strategy using the built bid landscape models. %
%Extensive experiments %on ad campaigns
%demonstrate the better performance over the baselines.  %As the future work, we will support CPA goal optimization with more complicated conversion settings in real-time. 

\newpage
{
\bibliographystyle{ACM-Reference-Format}
\bibliography{cpaall} 
}

\end{document}